\documentclass[preprint,12pt,authoryear]{elsarticle}
\usepackage{booktabs}
\usepackage{amsfonts}
\usepackage{amsthm}
\usepackage{amsmath}
\usepackage{rotating}
\usepackage{txfonts}
\usepackage{multirow}
\usepackage{amsbsy}
\usepackage{psfrag}
\usepackage{epsf}
\usepackage{graphicx}
\usepackage{makeidx}
\usepackage{fancybox}
\usepackage{CJK}
\usepackage{color}
\usepackage{bm}
\usepackage{threeparttable}
\usepackage{pythonhighlight}
\usepackage{titlesec}
\usepackage{caption}
\usepackage{fancyhdr}
\usepackage{placeins}
\usepackage[colorlinks,citecolor=blue,urlcolor=blue]{hyperref}
\usepackage{graphicx}
\usepackage{longtable}
\usepackage{amssymb}
\usepackage{hyperref}
\usepackage{url}
\usepackage{amsthm,color}
\usepackage{array}
\usepackage{amsthm,color}
\usepackage{mathrsfs}
\usepackage{amsmath}
\usepackage{amssymb}
\usepackage{graphicx}
\usepackage{epsfig}
\usepackage{lscape}
\usepackage{makecell}
\newcolumntype{P}[1]{>{\centering\arraybackslash}p{#1}}
\newcolumntype{M}[1]{>{\centering\arraybackslash}m{#1}}
\usepackage[ruled,linesnumbered]{algorithm2e}
\SetKwProg{Fn}{Function}{:}{end}
\SetKwFunction{Pv}{pvalue}
\SetKwFunction{Bo}{bound}
\usepackage{listings}
\usepackage{xcolor}

\definecolor{dkgreen}{rgb}{0,0.45,0}
\definecolor{mauve}{rgb}{0.58,0,0.82}

\newtheorem{theorem}{Theorem}

\newtheorem{corollary}{Corollary}
\newtheorem{exam}{Example}
\newtheorem{prop}{Proposition}
\newtheorem{rem}{Remark}

\begin{document}


\begin{center}
\Large {\bf Causal inference of Plackett-Burman designs in applications}

\end{center}
\bigbreak

\begin{center}
{\normalsize Shuchen Chang, Zhi-ming Li\footnote{
Address correspondence to Zhiming Li, College of Mathematics and System Science, Xinjiang
University, Urumqi,  China;
E-mail: zmli@xju.edu.cn}}
\end{center}

\begin{center}
    {\normalsize College of Mathematics and System Science, Xinjiang University, Urumqi, China}
\end{center}
\bigbreak

\small\begin{quote}{\it
{\bf Abstract} 
Driven by four applications of Plackett-Burman (PB) designs, this paper proposes a causal inference framework based on potential outcomes. First, we define the causal effects of the PB designs under finite populations. The Neymanian estimator of causal effects is then obtained, including the estimated variance and covariance.  Furthermore, we conduct a sharp null-hypothesis test and construct the Fisherian interval using an algorithm. Finally, the proposed methods are illustrated through these applications.
}

\noindent{\bf Key words}{\rm \quad  Plackett-Burman design, Causal effect, Neymanian estimator, Fisherian interval.
}
\end{quote}\normalsize

\vskip 4mm
\section{Introduction}
The Plackett-Burman (PB) design is a popular and economical class of screening designs, proposed by \cite{plackett1946design}, to improve quality control and support informed decision-making. Compared with two-level full designs, such as 32 runs for 5 factors and 64 runs for 6 factors, PB designs only require $N$ runs to complete an experiment with $N-1$ factors, where $N$ is a multiple of four. {\cite{barrentine1996illustration} divided Placket-Burman designs into two types: geometric and non-geometric. The former is fractional factorial designs yielding $2^{(N-1)-(N-1-q)}$ for run size $N=8,16, 32, 64, \ldots$, and so forth, where $q$ is the number of independent columns (factors). The latter can be generated by the cyclic permutation of the first row, and includes run size $N = 4k$ but not a power of 2 (e.g., $n=12, 20, 24,\ldots$). In the geometric PB design, the relationships among effects are either orthogonal or fully aliased. However, non-geometric PB designs are non-regular, and the effects may be partially aliased.  \cite{wu2011experiments} pointed out that PB design is typically applied at the initial stage of experimental design to rapidly identify the key factors that significantly affect the response variables from a relatively large pool of candidate factors, thereby providing a basis for subsequent experiments.

So far, PB design has been widely applied across various fields, including environment, biomedicine, and engineering. Next, we present four examples to show the applications of PB design. 

\begin{exam}\label{exam1} (\cite{filgueiras2021plackett})
In marine environments, microplastics are stored in the form of sediment. Due to the lack of a standardized extraction, a PB design with 8 runs was used to screen the factors affecting microplastic extraction, with two response variables: fibres and fragments. Seven independent variables were included in the PB design, namely sediment, extraction number, sediment amount, sedimentation time, density separation solution volume, agitation time, and blank volume, denoted by factors 1-7, respectively. Each run was repeated 5 times, and the average concentration $\pm$ standard deviation was given in each run. 
\end{exam}

 In Example \ref{exam1}, the PB design is a $2^{7-4}$ factorial design, and then it is geometric.  The significance of all factors was assessed by checking whether the effect size exceeded twice the average standard deviation. The results showed that none of the factors significantly influenced fibres, while factors 3 and 6 significantly influenced fragments.
 
\begin{exam}\label{exam2} (\cite{sahu2017screening}) Gedunin is a compound with anticancer and anti-inflammatory properties, while it has poor water solubility and low stability. Although liposomes can enhance the solubility and stability of Gedunin, their preparation process is complicated. Thus, a PB design with 11 factors and 12 runs was used to identify which factors significantly affect four liposome properties: vesicle size (VS), zeta potential (ZP), entrapment efficiency (EE), and drug-loading capacity (LC).  For simplicity, the eleven factors are labeled as Factors 1-11, corresponding to drug concentration, lipid concentration, cholesterol/lecithin ratio, chloroform/methanol ratio, volume of organic solvent, volume of aqueous vehicle, volume of aqueous vehicle, rehydrated solution pH, Water bath temperature, rotary evaporator rotation time, rotary evaporator rotation speed, and ultrasonication time, respectively. Each run was repeated 3 times, and the average $\pm$ standard deviation was given for each run.
\end{exam}

The PB design in Example \ref{exam2} is non-geometric. Based on the complete PB design, a linear model was established for the response variables and all main effects. Through analysis of variance (ANOVA), the significant factors affecting VS, ZP, EE, and LC are respectively factors 1, 2, 3, and 11; factors  2, 3, and 11;  factors 1, 2, 3, 4, 8, 9, and 11; and factors 1 and 2. Combining the trends observed in the 3D response surface plot, factors 1, 2, 3, 9, and 11 were identified as the five candidate factors for further design optimization.

\begin{exam}\label{exam3} (\cite{dhat2017risk}) 
Satranidazole (AST) is a high-dose, poorly water-soluble antiprotozoal drug used to treat intestinal amoebiasis. The low solubility and poor bioavailability make formulation development challenging, and the nanoprecipitation technique can improve both the drug's solubility and targeting ability. To optimize the formulation, a 12-run PB design with 8 factors was used to screen for the significant factors affecting four response variables: mean particle size (MPS), ZP, EE, and dissolution efficiency at 30 min (DE$_{30}$).  The eight independent variables were the amount of SAT in the organic phase, type of polymer in the organic phase, type of stabilizer in the aqueous phase, concentration of stabilizer, amount of polymer in the organic phase, volume of aqueous phase, stirring time, and stirring speed, which were coded as factors 1-8, respectively. Each run was repeated 3 times, and the mean $\pm$ standard deviation for each run.
\end{exam}

Base on the incomplete PB design, multiple regression analysis, together with ANOVA,  showed that the significant factors respectively affected the MPS, ZP,  EE, and DE$_{30}$ variable in Example \ref{exam3}:  factors 1, 3, 4, and 6; the only significant factor 3; factors 2, 3, and 5;  and factors 1, 2, and 5. Based on the overall PB screening results, factors 4, 5, and 6 were identified as the critical factors for subsequent optimization: PVA concentration in the aqueous phase, external aqueous phase volume, and ES100 amount. Then, using these factors constructed a $2^3$ full factorial design for further study, re-coded as 1, 2, and 3. The results showed that factors 1 and 3 significantly affected MPS and EE, whereas neither the main effects nor the interaction terms affected the polydispersity index (PDI).

\begin{exam}\label{exam4} (\cite{hardianto2023identification} )
In recent years, electrochemical biosensors have advanced rapidly, driving an increasing demand for products with low cost, high sensitivity, and other desirable characteristics. To address this, a 12-run PB design with 10 factors was applied to screen the key factors influencing the current response $y$ of electrochemical biosensors. The ten factors, namely AuNP volume, DPV potential range, ssDNA probe concentration, immobilization incubation time of the ssDNA probe, AgNO$_3$ concentration, AgNO$_3$ settling time, NaBH$_4$ concentration, NaBH$_4$ incubation time, Ag electrode potential, and duration of Ag electrodeposition, were denoted as factors 1–10. Each run was repeated 2 times, and the observations of each time were provided. 
\end{exam}

Similar to Example \ref{exam2}, the ANOVA based on the incomplete PB design showed that the significant factors influencing the dependent variable are factors 1-6 and 9 in Example \ref{exam4}.  

By analyzing the four applications above, complete and incomplete PB designs can quickly screen for key factors using traditional statistical methods such as regression or ANOVA, which often focus on the association, correlation, and prediction. However, these methods may not provide cause-and-effect relationships from randomization-based experimental data.  Causal inference is a methodology for drawing causal conclusions about changes in outcomes from observational data (\cite{rubin1974estimating} and \cite{splawa1990application}). Up to now, outstanding advances in causal inference have been achieved across various applications, such as medicine (\cite{aida2026oral}), economics (\cite{wang2025registration}), and transportation (\cite{graham2025causal}), and so forth.  \cite{dasgupta2015causal}  first proposed a causal inference framework based on potential outcomes for $2^{K}$ full factorial designs. Building on this work, numerous studies on $2^{K}$ full factorial designs can be referred to \cite{lu2016randomization,lu2019improved,lu2019sharpening}, \cite{li2020rerandomization},
\cite{zhao2022regression}, \cite{blackwell2023noncompliance}, and \cite{shi2025forward}. Due to resource constraints, such as cost and time, it is difficult to implement $2^{K}$ full factorial designs.  \cite{pashley2023causal} studied causal inference of $2^{K-p}$ fractional factorial designs. To our knowledge, causal inference has been extended to matched-pairs factorial designs (\cite{lu2017randomization}),  blocked designs (\cite{hartman2024improving}, \cite{liu2024randomization}, \cite{zhu2025design}), and split-plot designs (\cite{mukerjee2022causal}, \cite{zhao2022reconciling}).

Although PB designs are two-level factorial designs, most of them are neither $2^K$ full factorial designs nor $2^{K-p}$ fractional factorial designs, such as Examples \ref{exam2}-\ref{exam4}. However, the related research on causal inference is limited to PB designs, and existing results cannot be directly applied to them. Based on this, this paper aims to analyze the causal inference of PB designs and solve the issue in the above four examples.} The rest of the paper is organized as follows. In Section \ref{sec2}, we review the definitions of PB designs and introduce the causal interpretation. Section \ref{sec3} derives the Neymanian estimator of causal effects for PB designs. Section \ref{sec4} performs a sharp null hypothesis test and constructs a Fisherian confidence interval using an algorithm. Section \ref{sec5} reapplies these four examples from a causal inference perspective. A brief conclusion is presented in Section \ref{sec6}.

\section{Causal effects of PB designs}\label{sec2}
 In this section, we first define the causal effects of various order factors in PB designs via potential outcomes, and then introduce a random-assignment mechanism to allocate experimental units.

 PB designs are a type of two-level orthogonal design with $N-1$ factors in $N$ runs, denoted by $OA(N, 2^{N-1})$. 
Let the $N-1$ factor be indexed by $k(k=1,\ldots, N-1)$, and ${\bf z}_j$ be the $j$th treatment combination with $(N-1)$-dimensional vector. Denote ${\bf z}_j=(z_{j1},\ldots, z_{j(N-1)})$, where $z_{jk}$ is the level of $k$th factor in the $j$th treatment combination for $j=1,\ldots,N$ and $k=1,\ldots,N-1$. Without loss of generality, each $z_{jk}$ takes one of two levels: 1 or -1. Here, $1$ indicates a high level and $-1$ indicates a low level. Following \cite{pashley2023causal}, let ${\bf g}_k(k=1,2,\ldots, N-1)$ be a  column vector with $N$-dimension, corresponding to the $k$th factor of the PB design. The $l$-order interaction effects are computed by the dot product of $l$ effects, i.e. ${\bf g}_{k_1k_2\cdots k_l}={\bf g}_{k_1} {\bf g}_{k_2} \cdots {\bf g}_{k_l}$ for $k_1,\ldots, k_l\in \{1,\ldots, N-1\}$. Denote   
\({\bf g}_{0}=(1,1,\ldots, 1)\). \cite{deng1999minimum} applied the $J$-characteristics method for identifying nonregular designs. Thus, for any order $k=1,2,12,\ldots, 12\cdots(N-1)$, a PB design is regular if the values of its 
\({\bf g}_{0}^{T}{\bf g}_{k}\) are either 0 or \(\pm N\), and is nonregular if 
\(0<|{\bf g}_{0}^{T}{\bf g}_{k}|<n\) for at least one $k$. 

Under the stable unit treatment value assumption (SUTVA)  (\cite{rubin1980randomization}), each ${\bf z}_j$ is replicated \(n_j(\geq 2)\) times, resulting in a total of \(n=N\sum_{j=1}^N n_j\) units. Let $Y_i({\bf z}_j)$ be the potential outcome of $i$th unit under ${\bf z}_j$ for $i=1,\ldots, n,j=1,\ldots, N$. Then, the average potential outcome under ${\bf z}_j$ is 
$$\bar{Y}({\bf z}_{j})=\frac{1}{n}\sum_{i=1}^{n} Y_{i}({\bf z}_{j}),j=1,2\ldots, N.$$ 
Denote ${\bf Y}_i=(Y_i({\bf z}_1), Y_i({\bf z}_2),\ldots, Y_i({\bf z}_N))^{\top}(i=1,\ldots,N)$ and $\bar{\bf Y}=(\bar{Y}({\bf z}_1), \bar{Y}({\bf z}_2),\ldots, \bar{Y}({\bf z}_N))^{\top}$. Following \cite{dasgupta2015causal}, we extend the definitions of causal effects to those of PB designs. The main causal effect of each factor is the difference in average outcomes between the two halves of the potential outcomes. Denote $\mathcal{M}_1=\{1,2,\ldots, N-1\}$ as the set of all main effects. Then, the main causal effect of factor $k$ for unit $i$ is expressed by 
\begin{eqnarray}\label{tik}
\tau_i(k)=\frac{2}{N} {\bf g}_{k}^{\top}{\bf Y}_i, k\in \mathcal{M}_1,i=1,\ldots,N.
\end{eqnarray}
Let $\tau_i(0)=(1/N) g_{0}^{\top}{\bf Y}_i$ denote the mean of all ${\bf Y}_i$, and let $\mathcal{M}_2=\{kk': k,k'\in \mathcal{M}_1, k\neq k'\}$ as the set of the interaction effects between two factors. Since there are totally ${N-1\choose 2}$ two-factor interaction effects, the cardinality of the set $\mathcal{M}_2$ is  ${N-1\choose 2}$. Thus, the causal effect of two-factor interaction \(kk'\) for unit $i$ is
$$\tau_i(kk')=\frac{2}{N} {\bf g}_{kk'}^{\top}{\bf Y}_i, kk'\in \mathcal{M}_2, i=1,\ldots, N.$$ 
Denote $\mathcal{M}_l=\{k_1k_2\cdots k_l: k_1, k_2, \ldots, k_l \mbox{ are different and }  k_1, k_2, \ldots, k_l\in \mathcal{M}_1\}$ for $l=2,3\ldots, N-1$. Especially, $\mathcal{M}_l=\mathcal{M}_1$ for $l=1$. In general, the causal effect of an $l$-order interaction $k_1k_2\cdots k_l$ is defined: 
\begin{eqnarray}\label{tikl}
\tau_i(k_1k_2\cdots k_l)=\frac{2}{N} {\bf g}_{k_1k_2\cdots k_l}^{\top}{\bf Y}_i, k_1k_2\cdots k_l\in \mathcal{M}_l,l=1,2,\ldots, N-1, i=1,\ldots, N.\end{eqnarray}
In a finite population, the average causal effect of the $l$-order interaction $k_1k_2\cdots k_l$ is expressed as
\begin{eqnarray}\label{tk}
\tau(k_1k_2\cdots k_l)=\frac{1}{n}\sum_{i=1}^n\tau_i(k_1k_2\cdots k_l)=\frac{2}{N} {\bf g}_{k_1k_2\cdots k_l}^{\top}\left(\frac{1}{n}\sum_{i=1}^n{\bf Y}_i\right)=\frac{2}{N}{\bf g}_{k_1k_2\cdots k_l}^{\top}\bar{\bf Y}
\end{eqnarray}
for $ k_1k_2\cdots k_l\in \mathcal{M}_l(l=1,2,\ldots, N-1)$, where $\bar{\bf Y}=\frac{1}{n}\sum_{i=1}^n {\bf Y}_i$. 
for $ k_1k_2\cdots k_l\in \mathcal{M}_l(l=1,2,\ldots, N-1)$, where $\bar{\bf Y}=\frac{1}{n}\sum_{i=1}^n {\bf Y}_i$.  Note that 
\(\tau(0)=(1/N) g_{0}^{\top} \bar{\bf Y}\). For convenience, we present the above definitions in matrix form. Denote ${\bf G}=(g_0,g_1,\ldots,g_{N-1})$, and \(\tau=(2\tau(0),\tau(1),\tau(2),\ldots,\tau(N-1))^{\top}\). 
Therefore, the relationship between the factorial effects and the average potential outcomes is $\tau=\frac{2}{N}G^{\top}\bar{\bf Y}.$
Because of orthogonality, the inverse of $G$ is \(G^{-1}=\frac{1}{N} G^{T}\), it follows that $\bar{\bf Y}=\frac{1}{2} G\tau$.
\begin{rem}\label{rem}
In addition to geometric and non-geometric PB designs, the definitions (\ref{tikl}) and (\ref{tk}) can be applied to two-level orthogonal arrays. Since PB designs aim to screen out a few truly key factors from numerous influencing factors, we focus hereafter on the main causal effect and the average causal effect of the factors $1,\ldots, N-1$.      
\end{rem}

\begin{table}[h]
\setlength{\tabcolsep}{2pt}
\caption{12-run Plackett-Burman design.}\label{12-run Plackett-Burman design.}%
\label{tab1}
\begin{tabular*}{\linewidth}{@{\extracolsep{\fill}}crrrrrrrrrrrcccccc@{}}
\hline
 \multirow{2}{*}{Treatment} &\multicolumn{11}{c}{Factor ($k$)}&\multicolumn{4}{c}{Potential outcome ($Y_i({\bf z}_j)$)}&\multirow{2}{*}{Average($\bar{Y}({\bf z}_j)$)} \\ \cmidrule(l){2-12}\cmidrule(l){13-16}
combination (${\bf z}_j$)& 1 & 2 & 3 & 4 & 5 & 6 & 7 & 8 & 9 & 10 & 11 &1&2&$\cdots$&$n$&\\
    \hline
    ${\bf z}_1$   & 1 & 1 & -1 & 1 & 1 & 1 & -1 & -1 & -1 & 1 & -1&$Y_1({\bf z}_1)$&$Y_2({\bf z}_1)$&$\cdots$&$Y_n({\bf z}_1)$&$\bar{Y}({\bf z}_1)$ \\
    ${\bf z}_2$  & -1 & 1 & 1 & -1 & 1 & 1 & 1 & -1 & -1 & -1 & 1 &$Y_1({\bf z}_2)$&$Y_2({\bf z}_2)$&$\cdots$&$Y_n({\bf z}_2)$&$\bar{Y}({\bf z}_2)$ \\
    ${\bf z}_3$   & 1 & -1 & 1 & 1 & -1 & 1 & 1 & 1 & -1 & -1 & -1&$Y_1({\bf z}_3)$&$Y_2({\bf z}_3)$&$\cdots$&$Y_n({\bf z}_3)$ &$\bar{Y}({\bf z}_3)$ \\
    ${\bf z}_4$   & -1 & 1 & -1 & 1 & 1 & -1 & 1 & 1 & 1 & -1 & -1&$Y_1({\bf z}_4)$&$Y_2({\bf z}_4)$&$\cdots$&$Y_n({\bf z}_4)$&$\bar{Y}({\bf z}_4)$  \\
    ${\bf z}_5$  & -1 & -1 & 1 & -1 & 1 & 1 & -1 & 1 & 1 & 1 & -1&$Y_1({\bf z}_5)$&$Y_2({\bf z}_5)$&$\cdots$&$Y_n({\bf z}_5)$ &$\bar{Y}({\bf z}_5)$ \\
    ${\bf z}_6$  & -1 & -1 & -1 & 1 & -1 & 1 & 1 & -1 & 1 & 1 & 1&$Y_1({\bf z}_6)$&$Y_2({\bf z}_6)$&$\cdots$&$Y_n({\bf z}_6)$&$\bar{Y}({\bf z}_6)$  \\
    ${\bf z}_7$  & 1 & -1 & -1 & -1 & 1 & -1 & 1 & 1 & -1 & 1 & 1 &$Y_1({\bf z}_7)$&$Y_2({\bf z}_7)$&$\cdots$&$Y_n({\bf z}_7)$&$\bar{Y}({\bf z}_7)$ \\
    ${\bf z}_8$  & 1 & 1 & -1 & -1 & -1 & 1 & -1 & 1 & 1 & -1 & 1&$Y_1({\bf z}_8)$&$Y_2({\bf z}_8)$&$\cdots$&$Y_n({\bf z}_8)$&$\bar{Y}({\bf z}_8)$  \\
    ${\bf z}_9$  & 1 & 1 & 1 & -1 & -1 & -1 & 1 & -1 & 1 & 1 & -1&$Y_1({\bf z}_9)$&$Y_2({\bf z}_9)$&$\cdots$&$Y_n({\bf z}_9)$ &$\bar{Y}({\bf z}_9)$ \\
    ${\bf z}_{10}$ & -1 & 1 & 1 & 1 & -1 & -1 & -1 & 1 & -1 & 1 & 1&$Y_1({\bf z}_{10})$&$Y_2({\bf z}_{10})$&$\cdots$&$Y_n({\bf z}_{10})$&$\bar{Y}({\bf z}_{10})$  \\
    ${\bf z}_{11}$  & 1 & -1 & 1 & 1 & 1 & -1 & -1 & -1 & 1 & -1 & 1 &$Y_1({\bf z}_{11})$&$Y_2({\bf z}_{11})$&$\cdots$&$Y_n({\bf z}_{11})$&$\bar{Y}({\bf z}_{11})$ \\
    ${\bf z}_{12}$ & -1 & -1 & -1 & -1 & -1 & -1 & -1 & -1 & -1 & -1 & -1&$Y_1({\bf z}_{12})$&$Y_2({\bf z}_{12})$&$\cdots$&$Y_n({\bf z}_{12})$&$\bar{Y}({\bf z}_{12})$ \\ 
    \hline
    $ $ &${\bf g}_1$&${\bf g}_2$&${\bf g}_3$&${\bf g}_4$&${\bf g}_5$&${\bf g}_6$&${\bf g}_7$&${\bf g}_8$&${\bf g}_9$&${\bf g}_{10}$&${\bf g}_{11}$&${\bf Y}_1$&${\bf Y}_2$&$\cdots$&${\bf Y}_n$&$\bar{\bf Y}$\\\hline
\end{tabular*}
\end{table}

\begin{exam}{\rm For a PB design with 12 runs in Table \ref{tab1}, there are 11 factors denoted by 1,\ldots, 11, corresponding to the 11-dimension vectors ${\bf g}_k(k=1,\ldots, 11)$. From the definition (\ref{tik}),  the causal effect of factor $1$ for unit $i(i=1,\ldots, n)$ is 
\begin{align*}\tau_i(1)=\frac{2}{12}{\bf g}_{1}^{\top}{\bf Y}_i=&\frac{1}{6}(Y_i({\bf z}_1)-Y_i({\bf z}_2)+Y_i({\bf z}_3)-Y_i({\bf z}_4)-Y_i({\bf z}_5)-Y_i({\bf z}_6)+Y_i({\bf z}_7)+Y_i({\bf z}_8)+Y_i({\bf z}_9)\\
&-Y_i({\bf z}_{10})+Y_i({\bf z}_{11})-Y_i({\bf z}_{12})), i=1,\ldots, n.
\end{align*}
Similarly, the causal effect of other factor $k$ is $\tau_i(k)=\frac{1}{6}{\bf g}_k^{\top}{\bf Y}_i(k=2,\ldots,11)$. Thus, by the definition (\ref{tk}), the average causal effect of factor 1 is written as
\begin{align*}
\tau(1)=\frac{1}{n}\sum_{i=1}^{n}\tau_i(1)= &\frac{1}{6}(\bar{Y}({\bf z}_1)-\bar{Y}({\bf z}_2)+\bar{Y}({\bf z}_3)-\bar{Y}({\bf z}_4)-\bar{Y}({\bf z}_5)-\bar{Y}({\bf z}_6)+\bar{Y}({\bf z}_7)+\bar{Y}({\bf z}_8)+\bar{Y}({\bf z}_9)\\
&-\bar{Y}({\bf z}_{10})+\bar{Y}({\bf z}_{11})-\bar{Y}({\bf z}_{12})),
\end{align*}  
and those of other factors is $\tau(k)=\frac{1}{6}{\bf g}_k^{\top}\bar{\bf Y}(k=2,\ldots,11)$, where $\bar{\bf Y}=(\bar{Y}({\bf z}_1), \bar{Y}({\bf z}_2),\ldots, \bar{Y}({\bf z}_{12}))^{\top}$.
}  
\end{exam}

To estimate the causal effects (\ref{tik}), we need to study the relationship between treatment combination, experimental units, and observed data. For any PB design with $N$ runs, we define random assignment variables  of the $i$th $(i=1,\ldots,n)$ units for the $j$th $(j=1,\ldots, N)$ treatment combination ${\bf z}_j$ as follows
\[
W_i({\bf z}_j) = 
\begin{cases}
    1 ,& \text{if the $i$th unit is assigned to \({\bf z}_j\)},  \\
    0, & \text{otherwise}
\end{cases}
\]
with $P(W_i({\bf z}_j)=1)=\binom{n-1}{n_j-1}/\binom{n}{n_j}=n_j/n$. Under complete randomization, each \({\bf z}_j\) is allocated 
\(n_j\) units. Hence, the observed outcome for the $i$th unit is defined as \(Y_{i}^{obs}=\sum_{i=1}^{n} W_{i}({\bf z}_{j}) Y_{i}({\bf z}_{j})\). Thus, the average observed outcome is
\begin{eqnarray}\label{Yobs}
\bar{Y}^{obs}({\bf z}_{j})=\frac{1}{n_j} \sum_{i=1}^{n} W_{i}({\bf z}_{j}) Y_{i}({\bf z}_{j})=\frac{1}{n_j} \sum_{i: W_{i}(z_{j})=1}^{n} Y_{i}^{obs}, j=1,2\ldots,N.
\end{eqnarray}
Denote \(\bar{\bf Y}^{obs}=(\bar{Y}^{obs}({\bf z}_{1}), \bar{Y}^{obs}({\bf z}_{2}), \ldots, \bar{Y}^{obs}({\bf z}_{N}))\). Based on the observed data $\bar{\bf Y}^{obs}$, we conduct the statistical inference for causal effects in the following section.

\section{Causal inference}\label{secsthree}
In this section, we provide the estimators and confidence intervals for causal effects and focus on the difference between Neymanian and Fisherian variances.
\subsection{Neymanian estimators}\label{sec3}
The Neymanian estimator is a foundational approach in causal inference for randomized experiments. \cite{splawa1990application} pointed out that $\bar{\bf Y}^{obs}$ is an unbiased estimator of $\bar{\bf Y}$ under complete randomization. By replacing $\bar{\bf Y}^{obs}$ with $\bar{\bf Y}$, the estimator of $\tau(k)$ of PB designs with $N$ runs is
\begin{eqnarray}\label{3.1}
\hat{\tau}(k)=\frac{2}{N} {\bf g}_{k}^{\top}\bar{\bf Y}^{obs}, k\in \mathcal{M}_1.
\end{eqnarray}

\cite{dasgupta2015causal} derived an unbiased estimator and variance of casual effects for $2^{K}$ full factorial designs under balanced conditions. \cite{pashley2023causal} derived the properties of casual estimators for $2^{K-p}$ fractional factorial designs under unbalanced conditions. 
Following their approaches, we directly derive the expectation, variance, and covariance of the estimator $\hat{\tau}(k)$ for PB designs. 

\begin{theorem}\label{thm1}{\rm  
For $k\in \mathcal{M}_1$, we have
$E(\hat{\tau}(k))={\tau}(k)$, and
\begin{align}\label{var1}
\mbox{Var}_N\left(\hat{\tau}(k)\right)=\frac{4}{N^2} \sum_{j=1}^{N} \frac{1}{n_j}S^{2}\left({\bf z}_{j}\right) -\frac{1}{n} S_{k}^{2},
\end{align}
where $S^{2}\left({\bf z}_j\right)=\frac{1}{n-1}\sum_{i=1}^{n}(Y_i({\bf z}_j))-\bar{Y}({\bf z}_j))^2$ and $S_{k}^2=\frac{1}{n-1}\sum_{i=1}^{n}(\tau_i(k)-\tau(k))^2.$}
\end{theorem}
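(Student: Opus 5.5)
The plan is to treat $\hat{\tau}(k)$ as a linear combination of the observed treatment means and to use the standard finite-population moment formulas for complete randomization (\cite{splawa1990application}), as in \cite{dasgupta2015causal} and \cite{pashley2023causal}. The total sample size is taken to be $n=\sum_{j=1}^N n_j$, so each unit receives exactly one treatment combination. Unbiasedness follows from linearity. Since $E(W_i({\bf z}_j))=n_j/n$, we get $E(\bar Y^{obs}({\bf z}_j))=\frac{1}{n_j}\sum_{i=1}^n \frac{n_j}{n}Y_i({\bf z}_j)=\bar Y({\bf z}_j)$. Hence $E(\hat\tau(k))=\frac{2}{N}{\bf g}_k^\top\bar{\bf Y}=\tau(k)$ by (\ref{tk}).

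For the variance, write $\hat\tau(k)=\sum_{j}c_j\bar Y^{obs}({\bf z}_j)$ with $c_j=\frac{2}{N}g_{jk}$, so $c_j^2=4/N^2$. This gives $\mbox{Var}(\hat\tau(k))=\sum_j c_j^2\mbox{Var}(\bar Y^{obs}({\bf z}_j))+\sum_{j\neq j'}c_jc_{j'}\mbox{Cov}(\bar Y^{obs}({\bf z}_j),\bar Y^{obs}({\bf z}_{j'}))$. The joint moments of the indicators are needed next. From $P(W_i({\bf z}_j)=1)=n_j/n$ one obtains $P(W_i({\bf z}_j)W_{i'}({\bf z}_j)=1)=n_j(n_j-1)/(n(n-1))$ for $i\ne i'$. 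Because each unit gets one treatment, $W_i({\bf z}_j)W_i({\bf z}_{j'})=0$ for $j\ne j'$, and $P(W_i({\bf z}_j)W_{i'}({\bf z}_{j'})=1)=n_jn_{j'}/(n(n-1))$ for $i\ne i'$. Substituting these gives the two classical facts
\begin{align*}
\mbox{Var}(\bar Y^{obs}({\bf z}_j))&=\Big(\frac{1}{n_j}-\frac{1}{n}\Big)S^2({\bf z}_j),\\
\mbox{Cov}(\bar Y^{obs}({\bf z}_j),\bar Y^{obs}({\bf z}_{j'}))&=-\frac{1}{n}S({\bf z}_j,{\bf z}_{j'}),
\end{align*}
where $S({\bf z}_j,{\bf z}_{j'})=\frac{1}{n-1}\sum_i (Y_i({\bf z}_j)-\bar Y({\bf z}_j))(Y_i({\bf z}_{j'})-\bar Y({\bf z}_{j'}))$. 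Deriving these two identities, with the sums over $i=i'$ and $i\neq i'$ separated and the centering used to cancel cross terms, is the main computation. It is routine but must be done carefully to get the $-1/n$ factors right.

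Collecting terms gives
\[
\mbox{Var}(\hat\tau(k))=\sum_j\frac{c_j^2}{n_j}S^2({\bf z}_j)-\frac{1}{n}\Big(\sum_j c_j^2S^2({\bf z}_j)+\sum_{j\ne j'}c_jc_{j'}S({\bf z}_j,{\bf z}_{j'})\Big).
\]
The bracket is the sample variance of $\sum_j c_jY_i({\bf z}_j)=\frac{2}{N}{\bf g}_k^\top{\bf Y}_i=\tau_i(k)$ over units, which is $S_k^2$. The first sum equals $\frac{4}{N^2}\sum_j S^2({\bf z}_j)/n_j$ because $g_{jk}^2=1$. This yields (\ref{var1}). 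The only point needing care is the recognition step: it holds because the coefficients $c_j$ do not depend on $i$, so the quadratic form in the covariance matrix of ${\bf Y}_i$ is exactly the variance of $\tau_i(k)$. The orthogonality of the PB design is not actually needed for this theorem; it matters only for the covariance between different effects.
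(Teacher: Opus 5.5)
Your proposal is correct and follows the paper's route: the paper writes out no separate proof and simply invokes the standard Neymanian variance computation for linear contrasts of treatment means under complete randomization from \cite{dasgupta2015causal} and \cite{pashley2023causal}, which is what you carry out, with the correct convention $n=\sum_j n_j$ (the paper's ``$n=N\sum_j n_j$'' is a typo). Two minor points: the joint indicator probabilities follow from complete randomization itself, not from the marginal $n_j/n$; and your closing aside is slightly off, because the covariance formula in Theorem \ref{thm2} also uses only ${\bf g}_{kj}{\bf g}_{k'j}=\pm 1$ and does not need orthogonality.
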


Theorem \ref{thm1} reveals that $\hat{\tau}(k)$ is an unbiased estimator of ${\tau}(k)$, and Var$(\hat{\tau}(k))$ shows the fluctuation of $\hat{\tau}(k)$. Denote
$$S_{k,k'}^2=\frac{1}{n-1}\sum_{i=1}^{n}(\tau_i(k)-\tau(k))(\tau_i(k')-\tau(k')), k,k'\in \mathcal{M}_1.$$
It is necessary to study the covariance \mbox{Cov}$(\hat{\tau}(k),\hat{\tau}(k'))$.

\begin{theorem}\label{thm2}
{\rm 
For any $k,k'\in \mathcal{M}_1$,  we have
\[\mbox{Cov}^{}( \hat {\tau }(k),\hat {\tau }(k^{\prime })) =\frac{4}{N^2}\left[ \sum _{j:{\bf g}_{kj}={\bf g}_{k^{\prime }j}}\frac{1}{n_j}S^{2}({\bf z}_{j})-\sum _{j:{\bf g}_{kj}\neq {\bf g}_{k^{\prime }j}}  \frac{1}{n_j}S^{2}\left ({\bf z}_{j}\right) \right] -\frac {1}{n}S^{2}_{k,k^{\prime}}.\]
}
\end{theorem}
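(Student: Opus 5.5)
The plan is to reduce the covariance to the classical Neyman covariance formula for a completely randomized multi-arm experiment, applied to two linear contrasts of $\bar{\bf Y}^{obs}$. By (\ref{3.1}), $\hat\tau(k)=\frac{2}{N}{\bf g}_k^{\top}\bar{\bf Y}^{obs}$ and $\hat\tau(k')=\frac{2}{N}{\bf g}_{k'}^{\top}\bar{\bf Y}^{obs}$. Bilinearity then gives
\[
\mbox{Cov}(\hat\tau(k),\hat\tau(k'))=\frac{4}{N^2}\,{\bf g}_k^{\top}\,\mbox{Cov}(\bar{\bf Y}^{obs})\,{\bf g}_{k'} .
\]
So the work reduces to computing $\mbox{Cov}(\bar{\bf Y}^{obs})$ under complete randomization. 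This is the same ingredient used in Theorem \ref{thm1}, which is the special case $k=k'$, and I will rederive or reuse it in the same way.

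First, I will compute the moments of the assignment indicators. For $i\neq i'$ and $j\neq j'$:
\[
P(W_i({\bf z}_j)=1)=\frac{n_j}{n},\qquad P(W_i({\bf z}_j)W_{i'}({\bf z}_j)=1)=\frac{n_j(n_j-1)}{n(n-1)},
\]
\[
W_i({\bf z}_j)W_i({\bf z}_{j'})=0,\qquad P(W_i({\bf z}_j)W_{i'}({\bf z}_{j'})=1)=\frac{n_jn_{j'}}{n(n-1)}.
\]
Plugging these into $\bar Y^{obs}({\bf z}_j)=n_j^{-1}\sum_i W_i({\bf z}_j)Y_i({\bf z}_j)$ gives the standard results
\[
\mbox{Var}(\bar Y^{obs}({\bf z}_j))=\Big(\frac{1}{n_j}-\frac1n\Big)S^2({\bf z}_j),\qquad \mbox{Cov}(\bar Y^{obs}({\bf z}_j),\bar Y^{obs}({\bf z}_{j'}))=-\frac1n S({\bf z}_j,{\bf z}_{j'}),
\]
where $S({\bf z}_j,{\bf z}_{j'})=\frac{1}{n-1}\sum_i (Y_i({\bf z}_j)-\bar Y({\bf z}_j))(Y_i({\bf z}_{j'})-\bar Y({\bf z}_{j'}))$. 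In matrix form this reads $\mbox{Cov}(\bar{\bf Y}^{obs})=\mbox{diag}(S^2({\bf z}_j)/n_j)-\frac1n{\bf S}$, with ${\bf S}$ the finite-population covariance matrix of the ${\bf Y}_i$.

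Next, I substitute this matrix into the bilinear form. The diagonal part contributes
\[
\frac{4}{N^2}\sum_j g_{kj}g_{k'j}\,\frac{S^2({\bf z}_j)}{n_j},
\]
and since $g_{kj}g_{k'j}=\pm1$, with $+1$ exactly when $g_{kj}=g_{k'j}$, this splits into the two signed sums in the statement. The second part is $-\frac{1}{n}\frac{4}{N^2}{\bf g}_k^{\top}{\bf S}\,{\bf g}_{k'}$. Since $\tau_i(k)-\tau(k)=\frac{2}{N}{\bf g}_k^{\top}({\bf Y}_i-\bar{\bf Y})$, this term equals $-\frac1n S^2_{k,k'}$ exactly. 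Unbiasedness of $\bar{\bf Y}^{obs}$ is not even needed here, since only second moments enter.

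The only real obstacle is the bookkeeping in the cross-covariance $\mbox{Cov}(\bar Y^{obs}({\bf z}_j),\bar Y^{obs}({\bf z}_{j'}))$. That step needs the joint inclusion probability $n_jn_{j'}/(n(n-1))$, followed by careful algebra using $\sum_i(Y_i({\bf z}_j)-\bar Y({\bf z}_j))=0$ to turn a sum over $i\neq i'$ into $-\sum_i$. I would center the potential outcomes first, assuming without loss of generality that $\bar Y({\bf z}_j)=0$, to keep this clean.

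I would also note that the statement implicitly takes $n=\sum_j n_j$ as the total number of units, despite the typo $n=N\sum_j n_j$ in the paper.
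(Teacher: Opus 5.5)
Your proof is correct and follows essentially the route the paper intends. The paper gives no separate proof of Theorem~\ref{thm2}; it only says it follows the $2^K$ and $2^{K-p}$ derivations it cites, which proceed exactly as you do, through $\frac{4}{N^2}{\bf g}_k^{\top}\mbox{Cov}(\bar{\bf Y}^{obs}){\bf g}_{k'}$ with $\mbox{Cov}(\bar{\bf Y}^{obs})=\mbox{diag}\{S^2({\bf z}_j)/n_j\}-\frac{1}{n}{\bf S}$, the sign split via $g_{kj}g_{k'j}=\pm 1$, and the identification $\frac{4}{N^2}{\bf g}_k^{\top}{\bf S}\,{\bf g}_{k'}=S^2_{k,k'}$.
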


Ignoring the population variance of \(\tau(k)\) for all units,  we directly derive the estimates for PB designs under unbalanced conditions by Theorem \ref{thm2}.

\begin{corollary}\label{VarNey}{\rm 
 The conservative Neymanian variance estimator of \(\tau(k)\) satisfies
\begin{align}\label{cor}
\widehat{\text{Var}}_{\text{N}}(\hat{\tau}(k)) = \frac{4}{N^2}\sum_{j=1}^{N}\frac{1}{n_j} s^{2}({\bf z}_{j}), \qquad E[\widehat{\text{var}}_{\text{N}}\left(\hat{\tau}(k)\right)]  = \frac{4}{N^2}\sum_{j=1}^{N}\frac{1}{n_j} S^{2}\left({\bf z}_{j}\right),
\end{align}
 where  $s^{2}({\bf z}_j)=\frac{1}{n_j-1}\sum_{i=1}^{n}(Y^{obs}_i({\bf z}_j))-\bar{Y}^{obs}({\bf z}_j))^2$ is an unbiased estimator of \(S^{2}({\bf z}_{j})\).
}
\end{corollary}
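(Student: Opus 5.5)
The plan is to prove the two claims of Corollary \ref{VarNey} separately. The first is the unbiasedness of each $s^2({\bf z}_j)$ for $S^2({\bf z}_j)$. The second is the expectation identity, which then follows by linearity. Conservativeness is read off from Theorem \ref{thm1} at the end.

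\textbf{Unbiasedness of $s^2({\bf z}_j)$.} Under complete randomization, the set of units with $W_i({\bf z}_j)=1$ is a simple random sample without replacement of size $n_j$ from the finite population $\{Y_1({\bf z}_j),\ldots,Y_n({\bf z}_j)\}$.

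\begin{itemize}
\item I would write
\[(n_j-1)s^2({\bf z}_j)=\sum_{i=1}^n W_i({\bf z}_j)\bigl(Y_i({\bf z}_j)-\bar{Y}({\bf z}_j)\bigr)^2-n_j\bigl(\bar{Y}^{obs}({\bf z}_j)-\bar{Y}({\bf z}_j)\bigr)^2 .\]
\item For the first term I use $E(W_i({\bf z}_j))=n_j/n$. Its expectation is therefore $\frac{n_j}{n}(n-1)S^2({\bf z}_j)$.
\item For the second term I use the standard finite-population variance $\mathrm{Var}(\bar{Y}^{obs}({\bf z}_j))=\bigl(\frac{1}{n_j}-\frac{1}{n}\bigr)S^2({\bf z}_j)$. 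This comes from $\mathrm{Var}(W_i)=\frac{n_j}{n}(1-\frac{n_j}{n})$, $\mathrm{Cov}(W_i,W_{i'})=-\frac{n_j(n-n_j)}{n^2(n-1)}$ for $i\ne i'$, and the fact that the centred outcomes sum to zero. The second term then has expectation $(1-n_j/n)S^2({\bf z}_j)$.
\item Subtracting gives $E[(n_j-1)s^2({\bf z}_j)]=(n_j-1)S^2({\bf z}_j)$, which uses $n_j\ge 2$.
\end{itemize}

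\textbf{Expectation identity.} Since $\widehat{\mathrm{Var}}_N(\hat\tau(k))$ is a fixed linear combination $\frac{4}{N^2}\sum_j n_j^{-1}s^2({\bf z}_j)$, linearity of expectation gives the second display in (\ref{cor}) directly.

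\textbf{Conservativeness.} Comparing with (\ref{var1}) in Theorem \ref{thm1}, $E[\widehat{\mathrm{Var}}_N]-\mathrm{Var}_N(\hat\tau(k))=\frac{1}{n}S_k^2\ge 0$. So the estimator is conservative, with equality exactly when the unit-level effects $\tau_i(k)$ are constant (strict additivity). The term $S_k^2$ cannot be estimated because each unit reveals only one potential outcome, which is why dropping it is the natural choice.

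The only real work is the finite-population covariance of the $W_i({\bf z}_j)$ in the unbiasedness step, and this is standard Neyman sampling theory. One caveat: the paper's definition of $s^2({\bf z}_j)$ sums over $i=1,\ldots,n$. It must be read as summing only over the units assigned to ${\bf z}_j$, with weights $W_i({\bf z}_j)$, and I would state this interpretation explicitly.
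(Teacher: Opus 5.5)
Your proposal is correct and follows the paper's route: the paper reads the corollary off Theorem~\ref{thm1} (equivalently Theorem~\ref{thm2} with $k=k'$) by dropping the unidentifiable term $S_k^2/n$ and substituting the within-arm sample variances, and it then notes that the overestimate is exactly $S_k^2/n$. The only difference is that you prove the unbiasedness of $s^2({\bf z}_j)$ via the simple-random-sampling decomposition, which the paper asserts without proof. Your reading of $s^2({\bf z}_j)$ as a sum over the units assigned to ${\bf z}_j$ is also the right one.
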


Compared with (\ref{var1}) and (\ref{cor}), it follows that the overestimated portion is 
$
E[\widehat{\text{Var}}_{\text{N}}\left(\hat{\tau}(k)\right)]-\widehat{\text{Var}}(\tau(k))=S_{k}^2/n.
$
\begin{theorem}\label{thm3}{\rm If the three conditions hold: 
  (i) $S^{2}(z_{j})$ and $S_{k, k'}^{2}$ have limiting values, (ii) $\mbox{max} _{1\leq j\leq N} \mbox{max} _{1\leq i \leq n}(Y_{i}(z_{j})-\bar{Y}(z_{j}))^{2}/n \to 0$, (iii) $n_j/{n}$ has positive limit values, then
$\sqrt{n}\bigl(\hat{\tau}-E(\hat{\tau})\bigr)\xrightarrow{d}N(0,V),n \to \infty,$
where $V$ is the limit values of $n\mbox{Var}(\hat{\tau})$.}
\end{theorem}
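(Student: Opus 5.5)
The plan is to reduce Theorem \ref{thm3} to the finite-population central limit theorem for completely randomized experiments with multiple arms, as in \cite{li2017general}-type results used by \cite{dasgupta2015causal} and \cite{pashley2023causal}. The key observation is that $\hat{\tau}=\frac{2}{N}{\bf G}^{\top}\bar{\bf Y}^{obs}$ is a fixed linear transformation of the vector of arm means $\bar{\bf Y}^{obs}$. The matrix $\frac{2}{N}{\bf G}^{\top}$ does not depend on $n$. Hence it suffices to prove
\[
\sqrt{n}\bigl(\bar{\bf Y}^{obs}-\bar{\bf Y}\bigr)\xrightarrow{d}N(0,V_Y).
\]
The continuous mapping theorem (linear maps preserve Gaussian limits) then gives $\sqrt{n}(\hat{\tau}-E(\hat{\tau}))\xrightarrow{d}N(0,V)$, with $V=\frac{4}{N^2}{\bf G}^{\top}V_Y{\bf G}$. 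Here $E(\hat{\tau})=\tau$ by Theorem \ref{thm1}.

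First, I will write the covariance of $\bar{\bf Y}^{obs}$ explicitly, as already done in proving Theorems \ref{thm1} and \ref{thm2}. For $j\neq j'$ one has $\mathrm{Var}(\bar Y^{obs}({\bf z}_j))=(1/n_j-1/n)S^2({\bf z}_j)$ and $\mathrm{Cov}(\bar Y^{obs}({\bf z}_j),\bar Y^{obs}({\bf z}_{j'}))=-S_{jj'}/n$, where $S_{jj'}$ is the finite-population covariance of $Y_i({\bf z}_j)$ and $Y_i({\bf z}_{j'})$. Conditions (i) and (iii) make $n\,\mathrm{Cov}(\bar{\bf Y}^{obs})$ converge. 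Strictly, (i) gives limits of $S^2({\bf z}_j)$ and $S^2_{k,k'}$. Since $\tau_i=\frac{2}{N}{\bf G}^{\top}{\bf Y}_i$ is an invertible map of ${\bf Y}_i$, limits of all $S^2_{k,k'}$ (including index $0$) are equivalent to limits of all $S_{jj'}$, so I interpret (i) in that sense. This identifies $V$ as the limit of $n\mathrm{Var}(\hat\tau)$, consistent with Theorems \ref{thm1}--\ref{thm2}.

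Second, I will apply the Cramér--Wold device. Fix $a\in\mathbb{R}^N$ and consider $a^{\top}\sqrt{n}(\bar{\bf Y}^{obs}-\bar{\bf Y})=\sqrt{n}\sum_{i=1}^n\sum_j a_jW_i({\bf z}_j)(Y_i({\bf z}_j)-\bar Y({\bf z}_j))/n_j$. This is a linear rank-type statistic under a uniformly random permutation of units into blocks of sizes $n_1,\ldots,n_N$. I will invoke Hoeffding's combinatorial CLT, or equivalently the multi-arm finite-population CLT of Li and Ding (2017, JASA, Theorem 5). Its Lindeberg-type requirement is exactly
\[
\max_{j}\max_i\bigl(Y_i({\bf z}_j)-\bar Y({\bf z}_j)\bigr)^2/n\to0
\]
together with $n_j/n$ bounded away from $0$. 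These are conditions (ii) and (iii). If the limiting variance $a^{\top}V_Ya$ is zero, the statistic tends to $0$ in probability by Chebyshev, which is still consistent with the degenerate normal limit.

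The main obstacle is verifying the Lindeberg/Hoeffding condition for the combinatorial statistic. One must check that the normalized array $c_{ij}=\sqrt{n}\,a_j(Y_i({\bf z}_j)-\bar Y({\bf z}_j))/n_j$ satisfies
\[
\max_{i,j}(c_{ij}-\bar c_{i\cdot}-\bar c_{\cdot j}+\bar c_{\cdot\cdot})^2\big/\textstyle\sum_{i,j}(\cdots)^2\to0.
\]
The numerator is $O(\max(Y-\bar Y)^2/n)$ by (iii), so it tends to $0$ by (ii). The denominator is of the order of the limiting variance. When that variance is positive, the ratio vanishes. I will rely on the cited theorem rather than redo this combinatorial computation, and only note that the hypotheses match. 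If the cited theorem is stated for population sizes balanced within fixed arms, the unbalanced case follows since (iii) keeps all $n/n_j$ bounded.
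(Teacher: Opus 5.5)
Your proposal is correct and takes the same route as the paper, whose entire proof is an appeal to Theorem 5 of \cite{li2017general} for the vector of arm means followed by the fixed linear map $\frac{2}{N}{\bf G}^{\top}$; your Cram\'er--Wold and Hoeffding-condition discussion only unpacks what that theorem already delivers. One small remark: your reinterpretation of (i) is not needed. The limits of $S^2({\bf z}_j)$ for all $j$, together with those of $S^2_{k,k'}$ for $k,k'\in\mathcal{M}_1$, already determine the missing covariances involving index $0$, because these solve a linear system with coefficient matrix ${\bf G}\,\mathrm{diag}(1,2,\ldots,2)$, which is invertible; hence all $S_{jj'}$ converge under (i) exactly as stated.
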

\begin{proof} 
It follows from Theorem 5 in \cite{li2017general}.
\end{proof}

 Based on Theorem \ref{thm3}, the marginal normal distribution  of the $k$th component as
$\sqrt{n}\bigl(\hat{\tau}(k)-E(\hat{\tau}k))\bigr)\xrightarrow{d}N(0,V_{kk})$ for any $k\in\mathcal{M}_1$
where $V_{kk}$ is the $k$th diagonal element of $V$, which is the limiting value of $n\mbox{Var}(\hat{\tau}{(k)})$. Since $\mbox{Var}(\hat{\tau}{(k)})$ contains the intractable quantity $S_k^2$, we general use $\text{Var}_{\text{Ney}}(\hat{\tau}(k))$ from Corollary \ref{VarNey}. Further, an asymptotically conservative $100(1-\alpha)\%$ confidence interval for $\tau(k)$ is given by
\begin{align}\label{NeyCI}
\hat \tau(k)+z_{1-\alpha/2}\sqrt{{\text{var}}_{\text{Ney}}(\hat{\tau}(k))},
\end{align}
where $z_{1-\alpha/2}$ is the $1-\alpha/2$ quantile of the standard normal distribution.

Through Theorems \ref{thm1}-\ref{thm3}, we note that Neymanian inference focuses on the statistical inference of the average causal effects at the population level. In practice, it is also necessary to study that at the individual level. To solve the problem, we analyze the Fisherian inference based on the potential outcomes in the following subsection.
\subsection{Fisherian test and confidence interval}\label{sec4}
This subsection primarily investigates Fisherian test and confidence intervals for causal effects $\tau_i(k)(k=1,\ldots, N_1,i=1,\ldots, N)$ in the PB design. 
We first conduct a sharp null hypothesis that the individual-level effects $\tau_{i}(k)$ are exactly equal to $\eta_k$ in unit $i$ as follows:
\begin{align}\label{H0}
H_0^{\eta}:  \tau_{i}(k) = \eta_k,\quad k=1,\ldots, N-1, i=1,\ldots, N.
\end{align}

By (\ref{tik}) and the orthogonality of \(G\), namely \(G^{-1}=\frac{1}{N}G^\top\), the relationship between the vectors \({\bf Y}_i=(Y_i({\bf z}_1),
Y_i({\bf z}_2),\ldots,Y_i({\bf z}_{N-1}))^\top\) and \(\bm{\tau}_i=(2\tau_i(0),\tau_i(1),\tau_i(2),\ldots,\tau_i(N-1))^\top\) can be derived ${\bm \tau}_i=\frac{2}{N}G^T{\bf Y}_i$. Thus, we have ${\bf Y}_i=\frac{1}{2}G\bm{\tau}_i$. Let ${\bf z}_j=(z_{j1},z_{j2},\cdots, z_{j(N-1)})(j=1,\ldots, N)$ be the $j$th treatment combination. Thus,
\begin{align}
\begin{pmatrix}
Y_i(\mathbf{z}_1)\\
Y_i(\mathbf{z}_2)\\
\vdots\\
\vdots\\
Y_i(\mathbf{z}_{N-1})
\end{pmatrix}
=
\frac{1}{2}\begin{pmatrix}
1 & z_{11} & z_{12} & \cdots & z_{1(N-1)}\\
1 & z_{21} & z_{22} & \cdots & z_{2(N-1)}\\
\vdots & \vdots & \vdots & \ddots & \vdots\\
\vdots & \vdots & \vdots & \ddots & \vdots\\
1 & z_{N1} & z_{N2} & \cdots & z_{N(N-1)}
\end{pmatrix}
\begin{pmatrix}
2\tau_i(0)\\[1mm]
\tau_i(1)\\
\tau_i(2)\\
\vdots\\
\tau_i(N-1)
\end{pmatrix},
\end{align}
Denote $\mu_i=\tau_i(0)$ as a baseline value of unit $i$. Let ${\bf z}^{obs}_i=(z^{obs}_{i1},z^{obs}_{i2},\cdots, z^{obs}_{i(N-1)})$ be the observed treatment combination randomly assigned to unit $i$. For each $i$th unit, only the observed outcome $Y^{obs}_i$ is known. 
Denote a known vector ${\bm \eta}=(\eta_1,\ldots, \eta_{N-1})$. 
Under $H_0^\eta: \tau_{i}(k) = \eta_k(k=1,\ldots, N-1)$, we have
\begin{align}\label{imputeion1}
\mu_i=\tau_i(0)=Y_i^{obs}-\frac{1}{2}{\bf z}^{obs}_{i}{\bm \eta}=Y_i^{obs}-\frac{1}{2}\sum_{k=1}^{N-1}z^{obs}_{ik}\eta_k
\end{align}
to impute the remaining unobserved potential outcomes expressed by
\begin{align}\label{imputeion2}
Y_i^{mis}({\bf z}_j)=\mu_i+\frac{1}{2}{\bf z}_{j}{\bm \eta}=\mu_i+\frac{1}{2}\sum_{k=1}^{N-1}z_{jk}\eta_k, i=1,\ldots, n, j=1,\ldots, N.
\end{align}

After imputing all missing potential outcomes $\{Y_i^{mis}({\bf z}_j)\}$, we need to construct a statistic and test $H_0^{\bm\eta}$ of the $k$th factor in given $\bar{\bf Y}^{obs}$ and random assignment variable $W$, denoted by $\hat\tau(k, W, \bar Y^{obs})$. In general, $\hat\tau(k, W, \bar Y^{obs})=\hat{\tau}(k)=\frac{2}{N} {\bf g}_{k}^{\top}\bar{\bf Y}^{obs}$. From the assignment variable $W$, the total number of possible schemes is $\frac{n!}{\prod_{j=1}^{N} n_j!}$, which is too large to enumerate completely.  Thus, we choose the $B$ random permutations of $W$, denoted by $W^b$.  Define a \(p\)-value function  $p(\eta_k)$ of \(\eta_k\). We then use the Monte Carlo method, proposed by \cite{ding2024first}, to estimate the value $p(\eta_k)$ of the $k$th factor as follows:
\begin{align}\label{pvalu}
    \hat p(\eta_k)
=\frac{1}{B}\sum_{b=1}^{B}\mathbf 1\!\left\{\hat\tau(k,W^b,\bar Y^{b,obs})\geq \hat\tau(k,W,\bar Y^{obs})\right\}.
\end{align}
Given $0<\alpha<1$, if $\hat p(\eta_k) \geq\alpha$, we accept $H^\eta_0$. Otherwise, reject $H^\eta_0$. 

For the Fisherian inference, the new observed outcomes obtained from the imputed complete table $Y_i({\bf z}_j)$ under $W^b$ are defined as
$Y^{b, obs}_i=\sum_{i=1}^{n} W^b_{i}({\bf z}_{j}) Y_{i}({\bf z}_{j})$,
and the corresponding average observed outcome is
\begin{align}\label{observed outcome1}
    \bar{Y}^{b,obs}({\bf z}_{j})=\frac{1}{n_j} \sum_{i=1}^{n} W^b_{i}({\bf z}_{j}) Y_{i}({\bf z}_{j}), j=1,2\ldots, N.
\end{align}

\cite{luo2021leveraging} showed that, for the binary treatment, a statistic constructed from the difference in means induces monotone randomization $p$-values. Based on this, we prove the monotonicity of $\hat p(\eta_k)$ below.
\begin{prop}\label{prop}
{\rm
In (\ref{pvalu}), $
 \hat p(\eta_k)
=\frac{1}{B}\sum_{b=1}^{B}\mathbf 1\!\left\{  \hat\tau(k,W,\bar Y^{obs})\leq a+b\eta_k)\right\},
$ and $\hat{p}(\eta_k)$ is monotonically with respect to $\eta_k$.}
\end{prop}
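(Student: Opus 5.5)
The plan is to show that, for each fixed permuted assignment $W^b$, the randomization statistic $\hat\tau(k,W^b,\bar Y^{b,obs})$ computed from the imputed table is an affine function of $\eta_k$. The slope will not depend on $b$. The monotonicity of $\hat p(\eta_k)$ then follows from the monotonicity of indicator functions.

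\textbf{Step 1: linearity of the imputed table.} We fix the observed assignment $W$, the observed outcomes $Y_i^{obs}$, and the components $\eta_{k'}$ for $k'\neq k$. By (\ref{imputeion1}) and (\ref{imputeion2}), every imputed potential outcome satisfies
\[
Y_i({\bf z}_j)=Y_i^{obs}+\frac{1}{2}\sum_{k'=1}^{N-1}(z_{jk'}-z^{obs}_{ik'})\eta_{k'} .
\]
This is affine in $\eta_k$ with coefficient $\frac12(z_{jk}-z^{obs}_{ik})$.

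\textbf{Step 2: the permuted statistic.} Substituting into (\ref{observed outcome1}) gives
\[
\bar Y^{b,obs}({\bf z}_j)=\frac{1}{n_j}\sum_{i:W_i^b({\bf z}_j)=1}\Bigl(Y_i^{obs}-\frac12{\bf z}^{obs}_i\bm\eta\Bigr)+\frac12{\bf z}_j\bm\eta .
\]
Applying $\frac{2}{N}{\bf g}_k^\top$ and using the orthogonality ${\bf g}_k^\top{\bf g}_{k'}=N\delta_{kk'}$ (the columns of $G$ are orthogonal), the term $\frac{2}{N}\sum_j g_{jk}\cdot\frac12{\bf z}_j\bm\eta$ collapses to exactly $\eta_k$. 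We collect everything else into $a_b$, the value of $\frac{2}{N}{\bf g}_k^\top$ applied to the group means of the baselines $\mu_i$, together with the $\eta_k$-dependence inside $\mu_i$. The result is the form
\[
\hat\tau(k,W^b,\bar Y^{b,obs})=a_b+c_b\,\eta_k ,
\]
with $c_b=1-\frac{1}{N}\sum_j g_{jk}\frac{1}{n_j}\sum_{i:W^b_i({\bf z}_j)=1}z^{obs}_{ik}$.

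Since $|z^{obs}_{ik}|=1$, the averaged term has absolute value at most $1$, so $c_b\ge 0$. I would present this in the paper's notation as $a+b\eta_k$, rewriting the indicator as $\mathbf 1\{\hat\tau(k,W,\bar Y^{obs})\le a_b+c_b\eta_k\}$. The observed statistic $\hat\tau(k,W,\bar Y^{obs})$ does not depend on $\eta_k$ at all, because it uses only observed data.

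\textbf{Step 3: monotonicity.} For each $b$, $\eta_k\mapsto \mathbf 1\{T^{obs}\le a_b+c_b\eta_k\}$ is nondecreasing because $c_b\ge0$. Averaging over $b$ preserves this, so $\hat p(\eta_k)$ is monotone nondecreasing in $\eta_k$.

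\textbf{Main obstacle.} The delicate part is Step 2: tracking the $\eta_k$-dependence carried through $\mu_i$ and establishing the sign of the slope. I would first verify the bound $|c_b-1|\le 1$ directly from the $\pm1$ entries. If the sign argument proves awkward, I would fall back to stating monotonicity (nondecreasing) only for $c_b\ge0$. The fixed components $\eta_{k'}$, $k'\neq k$, enter only through $a_b$, so they do not affect the argument.
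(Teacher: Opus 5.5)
Your proposal is correct and follows essentially the same route as the paper: you write the permuted statistic as $a_b+c_b\eta_k$ with $c_b=\frac{1}{N}\sum_{j}\bigl(1-z_{jk}\bar z^{obs}_k({\bf z}_j,W^b)\bigr)$ and get $c_b\ge 0$ from $|z_{jk}\bar z^{obs}_k({\bf z}_j,W^b)|\le 1$. Your $\pm1$ argument for this bound is cleaner than the paper's, which asserts $0\le z_{jk}\le 1$ (false for $\pm1$ levels), although the paper's conclusion $b\ge 0$ still holds. One slip: your opening claim that the slope does not depend on $b$ is wrong, since $c_b$ depends on $W^b$ (e.g.\ $c_b=0$ when $W^b=W$); Step 3 only needs $c_b\ge 0$ for each $b$, so the argument is unaffected.
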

\begin{proof}
Denote \(\bar{\bf Y}^{b,obs}=(\bar{Y}^{b,obs}({\bf z}_{1}), \bar{Y}^{b,obs}({\bf z}_{2}), \ldots, \bar{Y}^{b,obs}({\bf z}_{N}))\).  First, substituting (\ref{imputeion1}) into (\ref{imputeion2}) yields:
\begin{align}
Y_i({\bf z}_j)=Y_i^{obs}+\frac{1}{2}\sum_{k=1}^{N-1}(z_{jk}-z^{obs}_{ik})\eta_k.\notag
\end{align}
The average observed outcome can be simplified to the following equation
\begin{align}
{\bar Y}^{b,obs}({\bf z}_j)&=\frac{1}{n_j}\sum_{i=1}^{n}W^b_i({\bf z}_j)\left(Y_i^{obs}+\frac{1}{2}\sum_{k=1}^{N-1}(z_{jk}-z^{obs}_{ik})\eta_k\right)\notag\\&=\frac{1}{n_j}\sum_{i=1}^{n}W^b_i({\bf z}_j)\left(Y_i^{obs}+\frac{1}{2}\sum_{k'\neq k}^{N-1}(z_{jk'}-z^{obs}_{ik'})\eta_{k'}\right)+\frac{1}{2}(z_{jk}-\frac{1}{n_j}\sum_{i=1}^{n}W^b_i({\bf z}_j)z^{obs}_{ik})\eta_k.\notag
\end{align}

Denote ${\bar z}^{obs}_{k}({\bf z}_j,W^b)=\frac{1}{n_j}\sum_{i=1}^{n}W^b_i({\bf z}_j)z^{obs}_{ik}$, which is the average value of the $k$th component of the ${\bf z}_j$ across all units assigned to ${\bf z}_j$ under $W^b$. Thus,
\begin{align}
{\bar Y}^{b,obs}({\bf z}_j)=\frac{1}{n_j}\sum_{i=1}^{n}W^b_i({\bf z}_j)\left(Y_i^{obs}+\frac{1}{2}\sum_{k'\neq k}^{N-1}(z_{jk'}-z^{obs}_{ik'})\eta_{k'}\right)+\frac{1}{2}(z_{jk}-{\bar z}^{obs}_{k}({\bf z}_j,W^b))\eta_k.\notag
\end{align}
It follows that $\hat\tau(k,W^b,\bar{\bf Y}^{b,obs})$ can be expressed as
\begin{align}
\hat\tau(k,W^b,\bar{\bf Y}^{b,obs})=&\frac{2}{N}g_k\bar{\bf Y}^{b,obs}=\frac{2}{N}\sum_{j=1}^{N}z_{jk}\bar{Y}^{b,obs}({\bf z}_j)\notag\\=&\frac{2}{N}\sum_{j=1}^{N}z_{jk}\left(\frac{1}{n_j}\sum_{i=1}^{n}W^b_i({\bf z}_j)\left(Y_i^{obs}+\frac{1}{2}\sum_{k'\neq k}^{N-1}(z_{jk'}-z^{obs}_{ik'})\eta_{k'}\right)\right)
+\frac{1}{N}\sum_{j=1}^{N}z_{jk}(z_{jk}-{\bar z}^{obs}_{k}({\bf z}_j,W^b))\eta_k\notag\\
=&a+b\eta_k\notag,
\end{align}
where 
\begin{align*}
a=&\frac{2}{N}\sum_{j=1}^{N}z_{jk}\left(\frac{1}{n_j}\sum_{i=1}^{n}W^b_i({\bf z}_j)\left(Y_i^{obs}+\frac{1}{2}\sum_{k'\neq k}^{N-1}(z_{jk'}-z^{obs}_{ik'})\eta_{k'}\right)\right),\\
b=&\frac{1}{N}\sum_{j=1}^{N}z_{jk}(z_{jk}-{\bar z}^{obs}_{k}({\bf z}_j,W^b))=\frac{1}{N}\sum_{j=1}^{N}(1-z_{jk}{\bar z}^{obs}_{k}({\bf z}_j,W^b)).\end{align*}
Hence, 
$ \hat p(\eta_k)
=\frac{1}{B}\sum_{b=1}^{B}\mathbf 1\!\left\{a+b\eta_k\geq  \hat\tau(k,W,\bar Y^{obs})\right\}.
$ Since $0\leq z_{jk}\leq1$ and $0\leq{\bar z}^{obs}_{k}({\bf z}_j,W^b)\leq1$, it follows that \(0\le z_{jk}{\bar z}^{obs}_{k}({\bf z}_j,W^b)\le 1\), and hence $b\ge 0$. Thus, \(\hat\tau(k,W^b,\bar Y^{b,obs})\) is a nondecreasing function of \(\eta_k\). This means that $\hat p(\eta_k)$ is nondecreasing in $\eta_k$. 
\end{proof}

Proposition \ref{prop} proves the monotonicity of $\hat p(\eta_k)$ with respect to \(\eta_k\), based on which we can construct a confidence interval using the bisection method. The Fisherian interval is constructed by repeatedly performing the hypotheses $H_0^\eta: \tau_{i}(k) = \eta_k$. Based on the test statistic $\hat \tau(k, W, \bar Y^{obs})$, we construct 
an interval interval $[\hat \tau(k, W,\bar Y^{obs})-c, \hat \tau(k, W,\bar Y^{obs})+c]$  such that a candidate point $\eta_k$ is generated within this interval to replace the $k$th element of $\bm \eta$, where $c$ is defined as the critical value. Under the $B$ random permutation $W^b$, the $\hat\tau(k, W^b, \bar{Y}^{b, obs})$ is then calculated based on the imputation data (\ref{imputeion1}) and (\ref{imputeion2}), and finally, the $\alpha$ and 1-$\alpha$ \text{ quantile} of the test statistic distribution $\hat p(\eta_k)$ are approximated using the bisection method to determine the confidence interval bounds. Let $L_k$ and $U_k$ denote the lower and upper bounds of the confidence interval, respectively, defined as follows: 
\begin{align*}
    L_k=\sup\{\eta_k:\hat p(\eta_k)\leq\alpha/2\},U_k=\inf\{\eta_k: \hat p(\eta_k)\geq1-\alpha/2\}.
\end{align*}
Then, the confidence interval for the $k$th factor $\tau(k)$ is given by $[L_k, U_k]$. The pseudocode for constructing the confidence interval is provided in Algorithm \ref{alg1} (see Appendix).

\subsection{Comparison of Neymanian and Fisherian variance}
Under the null hypothesis, \cite{ding2017paradox} established the difference between the Neymanian variance and the Fisherian variance in the $2^{K}$ full factorial designs. In this subsection, we analyze the difference between them in the case $H^\eta_{0}: \tau_i(k)=\eta_k$. 
Under $H_0^\eta$, $\mu_i$ is a unit baseline value that is invariant across ${\bf z}_j$. Accordingly, we define $
\bar{\mu}({\bf z}_{j})=\frac{1}{n_j} \sum_{i=1}^{n} W^b_{i}({\bf z}_{j})\mu_i
$ as the average baseline values for units assigned to ${\bf z}_j$ under $W^b$. 
By substituting (\ref{imputeion2}) into (\ref{observed outcome1}), the average potential outcome after imputation is obtained as
\begin{align}\label{Y}
\bar{Y}^{b,obs}({\bf z}_{j})=\frac{1}{n_j} \sum_{i=1}^{n} W^{b}_{i}({\bf z}_{j})\mu_i+\frac{1}{2n_j}\sum_{i=1}^{n} W^{b}_{i}({\bf z}_j){\bf z}_j{\bm \eta}=\bar{\mu}({\bf z}_{j})+\frac{1}{2}{\bf z}_j{\bm \eta}.
\end{align}

\begin{theorem}
Denote $\hat{\tau}^b(k)\equiv\hat\tau(k,W^b,\bar Y^{b,obs})=\frac{2}{N}g_k\bar{\bf Y}^{b,obs}$. The expectation and variance of $\hat{\tau}^b(k)$ in Fisherian inference are given by
\begin{align*}
E_F(\hat{\tau}^b(k)|H^\eta_0)=\eta_k,\quad {\mbox{Var}}_F(\hat{\tau}^b(k)|H^\eta_0)=\frac{4s^{2}}{N^2}\sum_{j=1}^{N}\frac{1}{n_j}.
 \end{align*}
The variance of the baseline values $\mu_i$ is calculated as
$s^2=\frac{1}{n-1}\sum_{i=1}^{n}(\mu_i-\bar{\mu})^2$, where $\bar{\mu}=\sum_{i=1}^{n}{\mu}_i/n$.
\end{theorem}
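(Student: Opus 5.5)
The plan is to work directly from the decomposition (\ref{Y}). Under $H_0^\eta$, the imputed table gives $\bar{Y}^{b,obs}({\bf z}_j)=\bar\mu({\bf z}_j)+\frac12{\bf z}_j{\bm\eta}$. Here the baseline values $\mu_i$ are fixed numbers once the observed data are conditioned on, and only the re-randomization $W^b$ is random. Substituting into $\hat\tau^b(k)=\frac{2}{N}{\bf g}_k^\top\bar{\bf Y}^{b,obs}$ splits the statistic into a random part and a deterministic part:
\[
\hat\tau^b(k)=\frac{2}{N}\sum_{j=1}^N z_{jk}\bar\mu({\bf z}_j)+\frac{1}{N}\sum_{j=1}^N z_{jk}\,{\bf z}_j{\bm\eta}.
\]
The second term equals $\frac1N\sum_{k'}\eta_{k'}\,{\bf g}_k^\top{\bf g}_{k'}$. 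By the orthogonality of the PB columns, ${\bf g}_k^\top{\bf g}_{k'}=N\mathbf 1\{k=k'\}$, so this term reduces exactly to $\eta_k$. All the randomness therefore sits in $A^b=\frac2N\sum_j z_{jk}\bar\mu({\bf z}_j)$.

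For the expectation, complete randomization gives $E_F\,\bar\mu({\bf z}_j)=\bar\mu$ for every $j$, since $P(W_i^b({\bf z}_j)=1)=n_j/n$. Hence
\[
E_F A^b=\frac2N\bar\mu\,{\bf g}_k^\top{\bf g}_0=0,
\]
because each PB column is balanced (${\bf g}_0^\top{\bf g}_k=0$). This yields $E_F(\hat\tau^b(k)\mid H_0^\eta)=\eta_k$.

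For the variance, I would apply Theorem \ref{thm1} to a fictitious potential-outcome table in which unit $i$ has the same value $\mu_i$ under every ${\bf z}_j$. In that table:
\begin{itemize}
\item $S^2({\bf z}_j)=s^2$ for all $j$;
\item the unit-level effects are $\frac2N{\bf g}_k^\top(\mu_i{\bf 1})=0$, so $S_k^2=0$.
\end{itemize}
Since $A^b$ is precisely the Neymanian estimator $\hat\tau(k)$ computed on this table under the assignment $W^b$, formula (\ref{var1}) gives directly
\[
\mathrm{Var}_F(A^b)=\frac4{N^2}\sum_j\frac{s^2}{n_j}.
\]
Adding the constant $\eta_k$ does not change the variance, which completes the argument.

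The main point requiring care is justifying this reuse of Theorem \ref{thm1}. As a cross-check, one can compute directly with the finite-population covariances $\mathrm{Var}(\bar\mu({\bf z}_j))=(\frac1{n_j}-\frac1n)s^2$ and $\mathrm{Cov}(\bar\mu({\bf z}_j),\bar\mu({\bf z}_{j'}))=-s^2/n$ for $j\ne j'$. This gives
\[
\frac4{N^2}\Bigl[\sum_j\frac{s^2}{n_j}-\frac{s^2}{n}\Bigl(\sum_j z_{jk}\Bigr)^2\Bigr],
\]
and the correction term vanishes again because ${\bf g}_k^\top{\bf g}_0=0$. It is worth stating explicitly that both the expectation and the variance rely on column balance and on orthogonality of the design matrix $G$. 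This is why the result holds for any two-level orthogonal array, as noted in Remark \ref{rem}.
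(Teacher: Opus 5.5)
Your proof is correct. Your cross-check is exactly the paper's argument. The paper takes $\mathrm{Var}(\bar\mu({\bf z}_j))=\frac{n-n_j}{n_jn}s^2$ and $\mathrm{Cov}(\bar\mu({\bf z}_j),\bar\mu({\bf z}_{j'}))=-s^2/n$ from Ding's finite-population result, expands $\frac{4}{N^2}{\bf g}_k^{\top}\mathrm{Var}(\bar{\bf Y}^{b,obs}){\bf g}_k$, and removes the correction term via $(\sum_j g_{kj})^2=0$.

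Your primary route is a genuinely different packaging. You treat $A^b$ as the Neymanian estimator of Theorem~\ref{thm1} applied to the constant table $Y_i({\bf z}_j)=\mu_i$ under the assignment $W^b$. This is legitimate: Theorem~\ref{thm1} holds for any fixed potential-outcome table under complete randomization with group sizes $n_j$, and a uniform permutation of $W$ is exactly such a randomization.

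What this buys is conceptual. It shows that the Fisherian randomization variance is the Neymanian variance evaluated on a table obeying the sharp null. There the unit-level effects are constant, so the $-S_k^2/n$ term, which makes the Neymanian variance estimator conservative, vanishes identically. You could even apply Theorem~\ref{thm1} directly to the full imputed table $Y_i({\bf z}_j)=\mu_i+\frac12{\bf z}_j{\bm\eta}$. There $S^2({\bf z}_j)=s^2$ and $\tau_i(k)=\eta_k$ for every $i$, so the mean and variance come out at once.

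What it costs is self-containment. Theorem~\ref{thm1} is stated in the paper without proof, and its proof is the same finite-population covariance computation. So the direct calculation, i.e.\ your cross-check and the paper's proof, is the more elementary of the two.

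You also make explicit the column balance ${\bf g}_0^{\top}{\bf g}_k=0$ and the orthogonality ${\bf g}_k^{\top}{\bf g}_{k'}=N\mathbf{1}\{k=k'\}$. The paper uses both only tacitly, when its expectation step silently drops the $\bar\mu$ term and the cross terms of ${\bf z}_j{\bm\eta}$.
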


\begin{proof}
Since $\mu_i$ is invariant under ${\bf z}_j$ that satisfies the conditions of Theorem 5's proof in \cite{ding2017paradox}, it follows that
\begin{align}
&{\mbox{Var}}\left(\bar{Y}^{b,obs}({\bf z}_{j})|H^\eta_0\right)=\mbox{Var}\left(\bar{\mu}({\bf z}_{j})|H^\eta_0\right)= \frac{n-n_j}{n_jn}s^2,\notag\\&
\mbox{Cov}\left(\bar{Y}^{obs,\eta}({\bf z}_{j}),\bar{Y}^{obs,\eta}({\bf z}_{j'})|H^\eta_0\right)=\mbox{Cov}\left(\bar{\mu}({\bf z}_{j}),\bar{\mu}({\bf z}_{j'})|H^\eta_0\right)= -\frac{1}{n}s^2.\label{VaYrobs}
\end{align}
Using the results in (\ref{VaYrobs}), the expressions for the expectation and variance of $\hat{\tau}^b(k)$ under are given as follows:
\begin{align*}
E_F\left(\hat{\tau}^b(k)|H^\eta_0\right)&=\frac{2}{N}{\bf g}_k^TE\left(\bar{Y}^{b,obs}|H^\eta_0\right)=\frac{2}{N}\sum_{j=1}^N{z}_{jk}E\left(\bar{\mu}({\bf z}_j)|H^\eta_0\right)+\frac{1}{N}\sum_{j=1}^N{ z}_{jk}{\bf z}_{j}{\bm \eta}=\frac{1}{N}\sum_{j=1}^N{ z}_{jk}{ z}_{jk}\eta_k=\eta_k,
\\
{\mbox{Var}}_F\left(\hat{\tau}^b(k)|H^\eta_0\right) & 
=\frac{4}{N^2}{\bf g}_{k}^{T} {\mbox{Var}}\left(\bar{\bf Y}^{b,obs}|H^\eta_0\right) {\bf g}_{k}\\& =\frac{4}{N^2}\left[\sum_{j=1}^{N} {\bf g}_{k j}^{2} 
\mbox{Var}\left(\bar{\mu}\left({\bf z}_{j}\right)|H^\eta_0\right)+\sum_{j} \sum_{{j'} \neq j} {\bf g}_{k j} {\bf g}_{k {j'}} {\mbox{Cov}}\left(\bar{\mu}\left({\bf z}_{j}\right), \bar{\mu}\left({\bf z}_{j'}\right)|H^\eta_0\right)\right]
\\&=\frac{4}{N^2}s^{2}\left[\sum_{j=1}^{N}\frac{n-n_j}{nn_j}{\bf g}^2_{k j}-\frac{1}{n} \sum_{j}^{N} \sum_{{j'}\neq j}^{N} {\bf g}_{k j} {\bf g}_{k {j'}}\right]
\\&=\frac{4}{N^2}s^{2}\left[\sum_{j=1}^{N}\frac{1}{n_j}{\bf g}^2_{k j}-\frac{1}{n}\left(\sum_{j=1}^{N}{\bf g}^2_{k j}+\sum_{j}^{N} \sum_{{j'}\neq j}^{N} {\bf g}_{k j} {\bf g}_{k {j'}}\right)\right]\\&=\frac{4}{N^2}s^{2}\left[\sum_{j=1}^{N}\frac{1}{n_j}{\bf g}^2_{k j}-\frac{1}{n}\left(\sum_{j=1}^{N}{\bf g}_{k j}\right)^2\right]=\frac{4}{N^2}\sum_{j=1}^{N}\frac{1}{n_j}s^{2}.
\end{align*}
\end{proof}

\begin{theorem}
Under unbalanced conditions, the difference between ${\mbox{Var}}_{F}(\hat{\tau}^b(k))$ and $\widehat{\mbox{Var}}_{N}(\hat{\tau}(k))$ is given by
\begin{align*} 
{\mbox{Var}}_{F}(\hat{\tau}^b(k))-\widehat{\mbox{Var}}_N(\hat{\tau}(k))=&\frac{4}{N^2(n-1)}\sum_{j=1}^{n}\frac{1}{n_j}\sum_{{j'}=1}^{n}n_{j'}\left(\bar{\mu}({\bf z}_j)-\bar{\mu}\right)^{2}\\
&+\frac{4}{N^2}\left(\sum_{j'=1}^{N}(\frac{(n_{j'}-1)\sum_{j=1}^{N}n^{-1}_j}{(n-1)}-\frac{1}{n_{j'}})s^2({\bf z}_{j’})\right),  
\end{align*}
where $s^{2}({\bf z}_j)=\frac{1}{n_j-1}\sum_{i=1}^{n}(Y_i({\bf z}_j))-\bar{Y}({\bf z}_j))^2=\frac{1}{n_j-1}\sum_{i=1}^{n}(\mu_i-\bar{\mu}({\bf z}_j))^2$.
Under balanced conditions $n_j=r(j=1,\ldots,N)$, each treatment combination is repeated $r$ times, with $r \to \infty$,  
\begin{align} \label{balanced}
&{\mbox{Var}}_{F}(\hat{\tau}^b(k))-\widehat{\mbox{Var}}_N(\hat{\tau}(k))=\frac{2}{N^3r}\sum_{j=1}^{n}\sum_{j’=1}^{n}( \left(\bar{\mu}({\bf z}_j)-\bar{\mu}({\bf z}_{j'})\right)^{2}+o_{p}(r^{-1}).
\end{align}
Thus, the variance of Fisherian is greater than that of Neymanian.
\end{theorem}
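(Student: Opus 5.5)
Under $H_0^\eta$ the imputed potential outcomes have the form $Y_i({\bf z}_j)=\mu_i+\frac12{\bf z}_j{\bm\eta}$. Within a treatment group the term $\frac12{\bf z}_j{\bm\eta}$ is constant, so the observed within-group variance satisfies $s^2({\bf z}_j)=\frac{1}{n_j-1}\sum_{i:W_i({\bf z}_j)=1}(\mu_i-\bar\mu({\bf z}_j))^2$. This is the identity quoted in the statement. I would therefore work entirely with the baseline values $\mu_i$. The plan is to compare the Fisherian variance from the preceding theorem,
\[
\mbox{Var}_F(\hat\tau^b(k))=\frac{4s^2}{N^2}\sum_{j=1}^N\frac1{n_j},
\]
with the Neymanian estimator $\widehat{\mbox{Var}}_N(\hat\tau(k))=\frac{4}{N^2}\sum_{j}\frac1{n_j}s^2({\bf z}_j)$ from Corollary \ref{VarNey}. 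The key tool is the one-way ANOVA decomposition of the total sum of squares of the $\mu_i$ into between-group and within-group parts:
\[
(n-1)s^2=\sum_{j'=1}^N n_{j'}\bigl(\bar\mu({\bf z}_{j'})-\bar\mu\bigr)^2+\sum_{j'=1}^N (n_{j'}-1)s^2({\bf z}_{j'}).
\]

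For the unbalanced case, I would substitute this decomposition for $s^2$ in $\mbox{Var}_F$ and multiply by $\frac{4}{N^2}\sum_j n_j^{-1}$. The between-group piece gives the first term of the claimed expression, namely $\frac{4}{N^2(n-1)}\sum_j n_j^{-1}\sum_{j'}n_{j'}(\bar\mu({\bf z}_{j'})-\bar\mu)^2$. The indices in the statement should be read this way, with the sums running over $j,j'=1,\ldots,N$. The within-group piece gives $\frac{4}{N^2}\sum_{j'}\frac{(n_{j'}-1)\sum_j n_j^{-1}}{n-1}s^2({\bf z}_{j'})$. Subtracting $\widehat{\mbox{Var}}_N$ then yields the second bracket, whose coefficient is $\frac{(n_{j'}-1)\sum_jn_j^{-1}}{n-1}-\frac1{n_{j'}}$. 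This step is pure algebra.

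For the balanced case, set $n_j=r$ and $n=Nr$. The within-group coefficient becomes $\frac{(r-1)N/r}{Nr-1}-\frac1r=\frac{N(r-1)-(Nr-1)}{r(Nr-1)}=\frac{1-N}{r(Nr-1)}$. This is $O(r^{-2})$, and since each $s^2({\bf z}_j)$ is bounded in probability, the second term is $o_p(r^{-1})$. For the first term I would use the identity
\[
\sum_{j}(\bar\mu({\bf z}_j)-\bar\mu)^2=\frac{1}{2N}\sum_j\sum_{j'}(\bar\mu({\bf z}_j)-\bar\mu({\bf z}_{j'}))^2,
\]
which holds because in the balanced design $\bar\mu$ is the simple average of the $\bar\mu({\bf z}_j)$. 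This turns the first term into $\frac{4}{N^2(Nr-1)}\cdot\frac{N}{r}\cdot r\cdot\frac{1}{2N}\sum\sum(\cdot)^2=\frac{2}{N^2(Nr-1)}\sum\sum(\cdot)^2$. Replacing $Nr-1$ by $Nr$ changes this by a relative factor $1+O(r^{-1})$, which gives the stated leading term $\frac{2}{N^3 r}\sum_j\sum_{j'}(\bar\mu({\bf z}_j)-\bar\mu({\bf z}_{j'}))^2$.

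The main obstacle is controlling the remainder orders, and it splits into two issues. First, under randomization each $\bar\mu({\bf z}_j)-\bar\mu$ is $O_p(r^{-1/2})$, so the leading term is itself only $O_p(r^{-2})$, the same order as the within-group remainder $\frac{N-1}{r(Nr-1)}s^2$. I would handle this by reading the display literally as an exact leading term plus an $O_p(r^{-2})=o_p(r^{-1})$ remainder. Second, the final claim that $\mbox{Var}_F\ge\widehat{\mbox{Var}}_N$ needs care. I would argue it asymptotically: after scaling both by $r$, the difference has a nonnegative sum-of-squares leading term, and the remainder vanishes. So the Fisherian variance is asymptotically no smaller than the Neymanian one, with strict inequality when the group means of the baselines differ.
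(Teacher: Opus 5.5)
Your proposal is correct and is the paper's argument: the one-way ANOVA decomposition of $s^2$ is substituted into $\mbox{Var}_F$ and, for $n_j=r$, the pairwise-difference identity is applied. Your exact within-group coefficient $\frac{1-N}{r(Nr-1)}$ and your corrected indices (sums over $1,\ldots,N$, with $\bar\mu({\bf z}_{j'})$ in the first term) are tighter than the paper's step of approximating $s^2$ directly. Your hedging of the final inequality is also justified: in the balanced case the difference equals exactly $\frac{4}{N^2r(r-1)}\bigl\{r\sum_{j}(\bar\mu({\bf z}_j)-\bar\mu)^2-(N-1)s^2\bigr\}$, which has the sign of $F-1$ for the one-way ANOVA $F$ statistic of the $\mu_i$ and mean zero under the sharp null, so ``Fisherian $\geq$ Neymanian'' can only be asserted asymptotically when the limits of the $\bar\mu({\bf z}_j)$ are not all equal, as you read it.
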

\begin{proof}
First,  expanding $s^{2}$ yields that
\begin{align}\label{s^2}
s^{2}=& \frac{1}{n-1}\sum_{i=1}^{n}\left(\mu_i-\bar{\mu}\right)^{2}\notag\\=& \frac{1}{n-1}\sum_{j=1}^{N}\sum_{\left\{i: W_{i}({\bf z}_j)=1\right\}}\left(\mu_i-\bar{\mu}({\bf z}_j)+\bar{\mu}({\bf z}_j)-\bar{\mu}\right)^{2} \notag\\=& \frac{1}{n-1}\sum_{j=1}^{N}\sum_{\left\{i: W_{i}({\bf z}_j)=1\right\}}\left(\mu_i-\bar{\mu}({\bf z}_j)\right)+\frac{1}{n-1}\sum_{j=1}^{n}n_j\left(\bar{\mu}({\bf z}_j)-\bar{\mu}\right)^{2}\notag\\=& \frac{1}{n-1}\sum_{j=1}^{N}(n_j-1)s^{2}({\bf z}_j)+\frac{1}{n-1}\sum_{j=1}^{n}n_j\left(\bar{\mu}({\bf z}_j)-\bar{\mu}\right)^{2}.
\end{align}

To compute ${\mbox{Var}}_{F}(\hat{\tau}^b(k))-\widehat{\mbox{Var}}_N(\hat{\tau}(k))$, we substitute $s^{2}$ in the expression with its form given in Equation (\ref{s^2}), yielding the final expression for ${\mbox{Var}}_{F}(\hat{\tau}^b(k))-\widehat{\mbox{Var}}_N(\hat{\tau}(k))$ as follows:
\begin{align*} 
{\mbox{Var}}_{F}(\hat{\tau}^b(k))-\widehat{\mbox{Var}}_N(\hat{\tau}(k))=&\frac{4}{N^2}\sum_{j=1}^{N}\frac{1}{n_j}s^{2}-\frac{4}{N^2}\sum_{j=1}^{n}\frac{1}{n_j}s^2({\bf z}_j) \\  =&\frac{4}{N^2(n-1)}\left(\sum_{j=1}^{N}\frac{1}{n_j}\left(\sum_{j'=1}^{N}(n_{j'}-1)s^{2}({\bf z}_j)+\sum_{{j'}=1}^{n}n_{j'}\left(\bar{\mu}({\bf z}_j)-\bar{\mu}\right)^{2}\right)-\sum_{j=1}^{n}\frac{1}{n_j}s^2({\bf z}_j)\right)\\  =&\frac{4}{N^2(n-1)}\sum_{j=1}^{n}\frac{1}{n_j}\sum_{{j'}=1}^{n}n_{j'}\left(\bar{\mu}({\bf z}_j)-\bar{\mu}\right)^{2}\\
&+\frac{4}{N^2}\left(\sum_{j'=1}^{N}(\frac{(n_{j'}-1)\sum_{j=1}^{N}n^{-1}_j}{(n-1)}-\frac{1}{n_{j'}})s^2({\bf z}_{j’})\right). 
\end{align*}

In the special case of a balanced design, ${\mbox{Var}}_{F}(\hat{\tau}^b(k))=\frac{4}{Nr}s^{2}$, and $\widehat{\mbox{Var}}_N(\hat{\tau}(k))=\frac{4}{N^2r}\sum_{j=1}^{N}s^2({\bf z}_j)$. As $p \to \infty$, the differences among $n$, $n-1$, $r-1$, and $r$ become negligible. Thus, under a balanced design $s^{2}$ can be simplified as
\begin{align} 
s^{2}=& \frac{1}{n-1}\sum_{i=1}^{n}\left(\mu_i-\bar{\mu}\right)^{2}\notag\\=& \frac{1}{n-1}\sum_{j=1}^{N}\sum_{\left\{i: W_{i}({\bf z}_j)=1\right\}}\left(\mu_i-\bar{\mu}({\bf z}_j)+\bar{\mu}({\bf z}_j)-\bar{\mu}\right)^{2}\notag \\=& \frac{1}{n-1}\sum_{j=1}^{N}\sum_{\left\{i: W_{i}({\bf z}_j)=1\right\}}\left(\mu_i-\bar{\mu}({\bf z}_j)\right)+\frac{r}{n-1}\sum_{j=1}^{n}\left(\bar{\mu}({\bf z}_j)-\bar{\mu}\right)^{2}\notag\\=& \frac{r-1}{n-1}\sum_{j=1}^{N}s^{2}({\bf z}_j)+\frac{r}{n-1}\sum_{j=1}^{n}\left(\bar{\mu}({\bf z}_j)-\bar{\mu}\right)^{2}\notag\\=& \frac{1}{N}\sum_{j=1}^{N}s^{2}({\bf z}_j)+ \frac{1}{N}\sum_{j=1}^{n}\left(\bar{\mu}({\bf z}_j)-\bar{\mu}\right)^{2} + o_p(r^{-1}),
\end{align}
Therefore, under a balanced design, the difference between 
${\mbox{Var}}_{F}(\hat{\tau}^b(k))$ and $\widehat{\mbox{Var}}_{N}(\hat{\tau}(k))$ is given by
\begin{align*} 
{\mbox{Var}}_{F}(\hat{\tau}^b(k))-\widehat{\mbox{Var}}_N(\hat{\tau}(k))&=\frac{4}{Nr}s^{2}-\frac{4}{N^2r}\sum_{j=1}^{N}s^2({\bf z}_j)=\frac{4}{N^2r}(Ns^{2}-\sum_{j=1}^{N}s^2({\bf z}_j))\\ & =\frac{4}{N^2r}(\sum_{j=1}^{n}\left(\bar{\mu}({\bf z}_j)-\bar{\mu}\right)^{2} + o_p(r^{-1}))=\frac{2}{N^3r}\sum_{j=1}^{n}\sum_{j’=1}^{n}( \left(\bar{\mu}({\bf z}_j)-\bar{\mu}({\bf z}_{j'})\right)^{2}+o_{p}(r^{-1}).
\end{align*}
\end{proof}

\section{Empirical analysis of Causal inferences}\label{sec5}

In this section, we primarily use Examples \ref{exam1}-\ref{exam4} to illustrate the causal inference in PB designs. The PB designs can be classified into four cases according to these examples:

Case 1. PB designs are complete and geometric, and the type of observation is mean ± standard deviation, see Example \ref{exam1}.

Case 2. PB designs are complete and non-geometric, and the type of observation is also mean ± standard deviation, see Example \ref{exam2}.

Case 3. PB designs are incomplete and non-geometric, and the type of observation is also mean ± standard deviation, see Example \ref{exam3}.

Case 4. PB designs are incomplete and non-geometric, and each replicate observed outcome is available, see Example \ref{exam4}.

For Cases 1-3, Neymanian inference can be performed for the design type (complete or incomplete, geometric or non-geometric) and any data reporting format. However, Fisherian inference requires that the outcome of each replicate observed outcomes is known; moreover, if the design is incomplete, it must be expanded to a complete design, with the effects of the added factor column assumed to be zero, for example, Case 4. Next, we illustrate the detailed process of the theoretical results using the above examples.

{\bf Example \ref{exam1}.} (Cont.)   \cite{filgueiras2021plackett} provided mean $\pm$ standard deviation of outcomes in Table \ref{exam_table1}.  The PB design with 8 runs, which is complete and geometric.

\begin{table}[!http]
\caption{Original design matrix and  data for Example \ref{exam1}.}
\setlength{\tabcolsep}{1pt}
\label{exam_table1}
\begin{tabular*}{\linewidth}{@{\extracolsep{\fill}}c*{7}{r}cccccc@{}}
\toprule
\multirow{2}{*}{${\bf z}_j$} 
& \multicolumn{7}{c}{Factor} 
& \multicolumn{3}{c}{Fibres} 
& \multicolumn{3}{c}{Fragments}  \\
\cmidrule(lr){2-8}
\cmidrule(lr){9-11}
\cmidrule(lr){12-14}
& 1 & 2 & 3 & 4 & 5 & 6 & 7 &mean $\pm$ SD&$\bar{Y}^{obs}({\bf z}_j)$ &$s^2({\bf z}_j)$ &mean $\pm$ SD&$\bar{Y}^{obs}({\bf z}_j)$ & $s^2({\bf z}_j)$\\
\midrule
${\bf z}_1$ &  1 &  1 &  1 & -1 &  1 & -1 & -1 & 26 $\pm$ 24.1 & 26 &580.81 & 106 $\pm$ 40.4& 106   &  1632.16 \\
${\bf z}_2$ &  1 &  1 & -1 &  1 & -1 & -1 &  1 & 232 $\pm$ 108&232 & 11664.00 &448 $\pm$ 103 &448   & 10609.00 \\
${\bf z}_3$ &  1 & -1 &  1 & -1 & -1 &  1 &  1 & 42 $\pm$ 26.8 &42 &   718.24 &292 $\pm$ 47.6 &292   &  2265.76 \\
${\bf z}_4$ & -1 &  1 & -1 & -1 &  1 &  1 &  1 &253 $\pm$ 269 &253 & 72361.00 &540 $\pm$ 142&540   & 20164.00 \\
${\bf z}_5$ &  1 & -1 & -1 &  1 &  1 &  1 & -1 &216 $\pm$ 84.1 &216 &  7072.81 & 336 $\pm$ 81.7&336   &  6674.89 \\
${\bf z}_6$ & -1 & -1 &  1 &  1 &  1 & -1 &  1 & 90 $\pm$ 85.5 &90 &  7310.25 & 86.7 $\pm$ 27.4 &86.7 &   750.76 \\
${\bf z}_7$ & -1 &  1 &  1 &  1 & -1 &  1 & -1& 339 $\pm$ 257& 339 & 66049.00 &183 $\pm$ 65.4 &183   &  4277.16 \\
${\bf z}_8$ & -1 & -1 & -1 & -1 & -1 & -1 & -1 & 125 $\pm$ 146&125 & 21316.00 & 41.5 $\pm$ 45.1 &41.5 & 2034.01 \\
\bottomrule
\end{tabular*}%
\end{table}

Table \ref{exam_table1} presents the mean $\pm$ standard deviation of the results from five repeated observed outcomes. According to the preceding definitions of $\bar{Y}^{obs}({\bf z}_j)$ and $s^2({\bf z}_j)$, the mean reported in Example \ref{exam1} corresponds to $\bar{Y}^{obs}({\bf z}_j)$, and the square of its standard deviation corresponds to $s^2({\bf z}_j)$. $\hat\tau(k)$ is first calculated using (\ref{3.1}), followed by the calculation of $\widehat{\text{Var}}_{\text{N}}(\hat{\tau}(k))$ using (\ref{cor}). Finally, the confidence interval for Neymanian inference is determined by (\ref{NeyCI}). The final results are listed in Table \ref{exam_sum1} and Figure \ref{fig:example1}.

\begin{figure}[h]
\centering
\begin{tabular}{cc}
\includegraphics[width=0.5\textwidth]{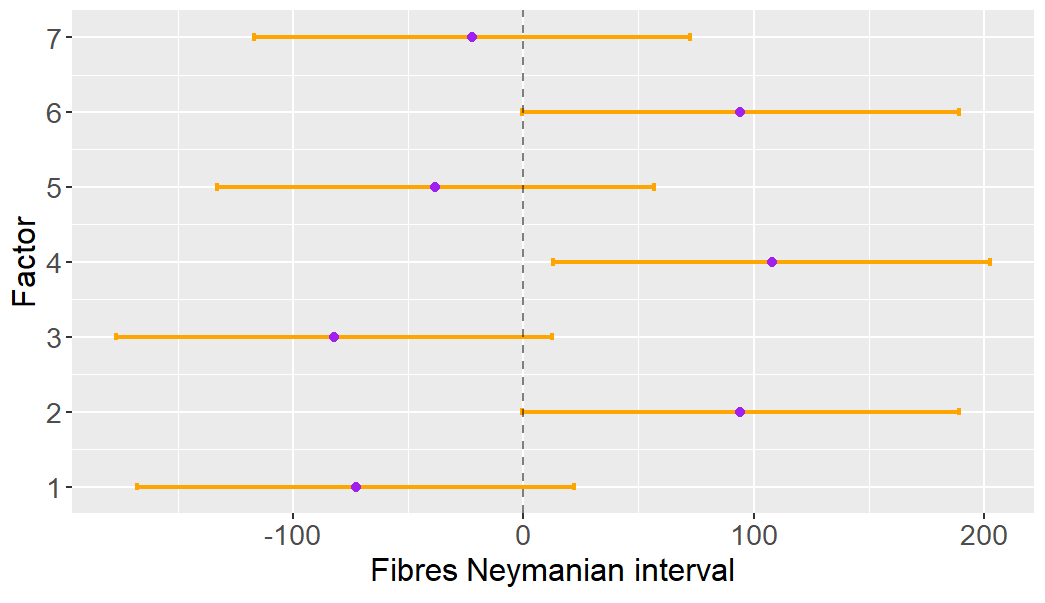} &
\includegraphics[width=0.5\textwidth]{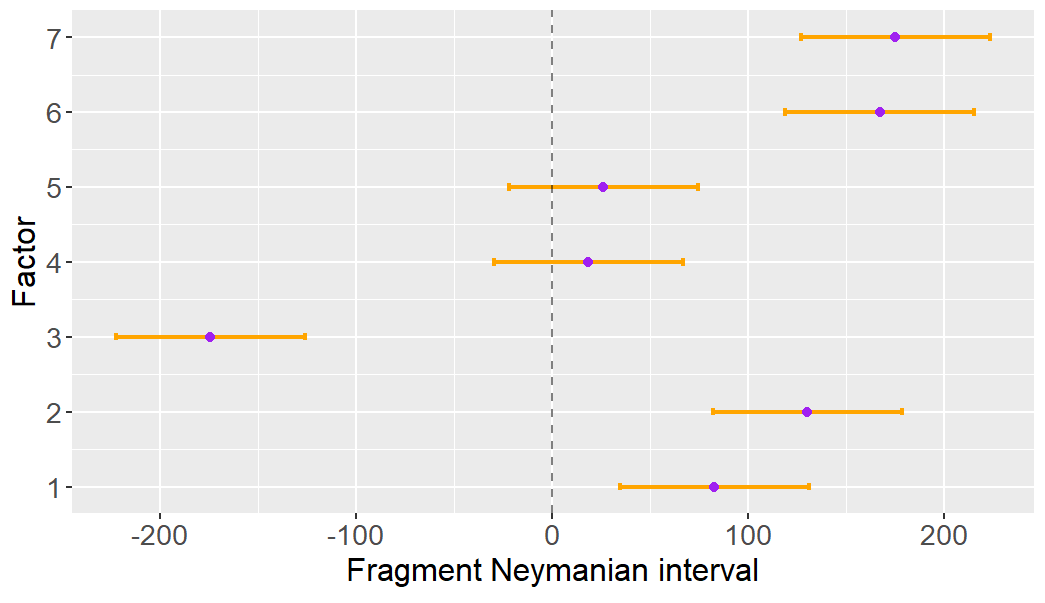}
\end{tabular}
\caption{95\% Neymanian confidence intervals for Example \ref{exam1}.} \label{fig:example1}
\end{figure}

In Neymanian inference within the potential outcomes framework, if the confidence interval for a factor's effect does not include 0, then the factor may be regarded as having a nonzero average causal effect on the corresponding outcome variable. As shown in Figure \ref{fig:example1}, for fibers, the confidence intervals for the corresponding average causal effects of factor 4 do not include 0. For fragments, the confidence intervals of the average causal effects of factors 1-3, 6, and 7 do not include 0. Therefore, it can be concluded that the average causal effects of factor 4 on fibers are nonzero, and those of factors 1-3, 6, and 7 on fragments are also nonzero.

\begin{table}[!http]
\caption{Summary of analysis of 95\% Neymainan intervals of PB design reported in Example \ref{exam1}.}
\setlength{\tabcolsep}{1pt}
\label{exam_sum1}
\begin{tabular*}{\linewidth}{@{\extracolsep{\fill}}{c}rrrr@{}}
\toprule
\multirow{2}{*}{Factor} 
& \multicolumn{2}{r}{Fibres} 
& \multicolumn{2}{r}{Fragments}  \\
\cmidrule(lr){2-3}
\cmidrule(lr){4-5}
 &Point estimate&Neymaian interval &Point estimate&Neymaian interval\\
\midrule
1 & -72.75 & $[-167.5, 22.03]$ & 82.70 & $[34.49, 130.9]$ \\
2 & 94.25 & $[-0.530, 189.0]$ & 130.20 & $[81.99, 178.4]$ \\
3 & -82.25 & $[-177.0, 12.53]$ & -174.45 & $[-222.7, -126.2]$ \\
4 & 107.75 & $[12.97, 202.5]$ & 18.55 & $[-29.66, 66.76]$ \\
5 & -38.25 & $[-133.0, 56.53]$ & 26.05 & $[-22.16, 74.26]$ \\
6 & 94.25 & $[-0.530, 189.0]$ & 167.20 & $[119.0, 215.4]$ \\
7 & -22.25 & $[-117.0, 72.53]$ & 175.05 & $[126.8, 223.3]$ \\
\bottomrule
\end{tabular*}%
\end{table}
Further examining Table \ref{exam_sum1}, for fibers, factor 4 has a positive effect, while for fragments, factor 3 has a negative effect, and factors 1, 2, 6, and 7 have positive effects. These directions are consistent with those identified in Example \ref{exam1}. Moreover, the factors identified by the Neymanian inference cover those selected in the original study based on whether the effect size exceeded twice the average standard deviation.

{\bf Example \ref{exam2}.} (Cont.) \cite{sahu2017screening} provided mean $\pm$ standard deviation of three repeated observed outcomes in Table \ref{exam2_or}.  The PB design with 12 runs is complete and non-geometric. The original paper is based on a statistical inference framework, they first examined the relative magnitudes of factor effects by ranking standardized effects in the Pareto Plot, then determined effect directions using regression coefficients and a 3D plot, thereby identifying the primary influential factors.

\begin{table}[!http]
\centering
\caption{Original design matrix and  data for Example \ref{exam2}.}
\label{exam2_or}
\tiny
\setlength{\tabcolsep}{1.2pt}
\renewcommand{\arraystretch}{1.08}
\resizebox{\textwidth}{!}{
\begin{tabular}{r*{11}{r}rrrrrrrr}
\toprule
\multirow{2}{*}{${\bf z}_{j}$}
& \multicolumn{11}{r}{Design factors}
& \multicolumn{4}{c}{Outcome variable column 1}
& \multicolumn{4}{c}{Outcome variable column 2} \\
\cmidrule(lr){2-12}
\cmidrule(lr){13-16}
\cmidrule(lr){17-20}
& $1$ & $2$ & $3$ & $4$ & $5$ & $6$
& $7$ & $8$ & $9$ & ${10}$ & ${11}$
& variable & mean $\pm$ SD & $\bar{Y}^{obs}({\bf z}_j)$ & $s^2({\bf z}_j)$
& variable & mean $\pm$ SD& $\bar{Y}^{obs}({\bf z}_j)$ & $s^2({\bf z}_j)$ \\
\midrule
\multirow{2}{*}{${\bf z}_{1}$}
& \multirow{2}{*}{$1$} & \multirow{2}{*}{$-1$} & \multirow{2}{*}{$-1$}
& \multirow{2}{*}{$-1$} & \multirow{2}{*}{$1$} & \multirow{2}{*}{$-1$}
& \multirow{2}{*}{$1$} & \multirow{2}{*}{$1$} & \multirow{2}{*}{$-1$}
& \multirow{2}{*}{$1$} & \multirow{2}{*}{$1$}
& VS & 278 $\pm$ 5.00  & 278 & 25 
& ZP & 20.25 $\pm$ 0.97 & 20.25 & 0.9409\\
\multicolumn{12}{c}{}
& EE & 58.45 $\pm$ 0.70 & 58.45 & 0.4900
& LC & 10.47 $\pm$ 0.62 & 10.47 & 0.3844  \\
\multirow{2}{*}{${\bf z}_{2}$}
& \multirow{2}{*}{$-1$}
& \multirow{2}{*}{$1$}
& \multirow{2}{*}{$1$}
& \multirow{2}{*}{$-1$}
& \multirow{2}{*}{$1$}
& \multirow{2}{*}{$1$}
& \multirow{2}{*}{$1$}
& \multirow{2}{*}{$-1$}
& \multirow{2}{*}{$-1$}
& \multirow{2}{*}{$-1$}
& \multirow{2}{*}{$1$}
& VS & 282 $\pm$ 8.72 & 282 & 76.0384
& ZP & 38.15 $\pm$ 1.67 & 38.15 & 2.7889 \\
\multicolumn{12}{c}{}
& EE & 73.40 $\pm$ 0.79 & 73.40 & 0.6241 
& LC & 3.54 $\pm$ 0.49 & 3.54 & 0.2401  \\
\multirow{2}{*}{${\bf z}_{3}$}
& \multirow{2}{*}{$-1$}
& \multirow{2}{*}{$1$}
& \multirow{2}{*}{$-1$}
& \multirow{2}{*}{$1$}
& \multirow{2}{*}{$1$}
& \multirow{2}{*}{$-1$}
& \multirow{2}{*}{$1$}
& \multirow{2}{*}{$1$}
& \multirow{2}{*}{$1$}
& \multirow{2}{*}{$-1$}
& \multirow{2}{*}{$-1$}
& VS & 682 $\pm$ 17.06 & 682 & 291.0436
& ZP &34.72 $\pm$ 1.27 & 34.72 & 1.6129\\
\multicolumn{12}{c}{}
& EE & 80.51 $\pm$ 1.31 & 80.51 & 1.7161   
& LC &  3.86 $\pm$ 0.58 & 3.86 & 0.3364\\
\multirow{2}{*}{${\bf z}_{4}$}
& \multirow{2}{*}{$-1$}
& \multirow{2}{*}{$1$}
& \multirow{2}{*}{$1$}
& \multirow{2}{*}{$1$}
& \multirow{2}{*}{$-1$}
& \multirow{2}{*}{$-1$}
& \multirow{2}{*}{$-1$}
& \multirow{2}{*}{$1$}
& \multirow{2}{*}{$-1$}
& \multirow{2}{*}{$1$}
& \multirow{2}{*}{$1$}
& VS & 292 $\pm$ 7.94 & 292 & 63.0436
& ZP & 39.20 $\pm$ 0.71 & 39.20 & 0.5041 \\
\multicolumn{12}{c}{}
& EE & 76.31 $\pm$ 0.84 & 76.31 & 0.7056 
& LC & 3.67 $\pm$ 0.49 & 3.67 & 0.2401\\
\multirow{2}{*}{${\bf z}_{5}$}
& \multirow{2}{*}{$1$}
& \multirow{2}{*}{$1$}
& \multirow{2}{*}{$1$}
& \multirow{2}{*}{$-1$}
& \multirow{2}{*}{$-1$}
& \multirow{2}{*}{$-1$}
& \multirow{2}{*}{$1$}
& \multirow{2}{*}{$-1$}
& \multirow{2}{*}{$1$}
& \multirow{2}{*}{$1$}
& \multirow{2}{*}{$-1$}
& VS & 990 $\pm$ 18.02 & 990 & 324.7204 
& ZP & 35.71 $\pm$ 1.04 & 35.71 & 1.0816 \\
\multicolumn{12}{c}{}
& EE & 81.89 $\pm$ 1.18 & 81.89 & 1.3924
& LC & 7.57 $\pm$ 0.48 & 7.57 & 0.2304 \\
\multirow{2}{*}{${\bf z}_{6}$}
& \multirow{2}{*}{$-1$}
& \multirow{2}{*}{$-1$}
& \multirow{2}{*}{$1$}
& \multirow{2}{*}{$-1$}
& \multirow{2}{*}{$1$}
& \multirow{2}{*}{$1$}
& \multirow{2}{*}{$-1$}
& \multirow{2}{*}{$1$}
& \multirow{2}{*}{$1$}
& \multirow{2}{*}{$1$}
& \multirow{2}{*}{$-1$}
& VS & 451 $\pm$ 9.16 & 451 & 83.9056
& ZP & 23.35 $\pm$ 0.70 & 23.35 & 0.4900 \\
\multicolumn{12}{c}{}
& EE & 51.41 $\pm$ 0.52 & 51.41 & 0.2704 
& LC & 4.89 $\pm$ 0.12 & 4.89 & 0.0144  \\
\multirow{2}{*}{${\bf z}_{7}$}
& \multirow{2}{*}{$-1$}
& \multirow{2}{*}{$-1$}
& \multirow{2}{*}{$-1$}
& \multirow{2}{*}{$-1$}
& \multirow{2}{*}{$-1$}
& \multirow{2}{*}{$-1$}
& \multirow{2}{*}{$-1$}
& \multirow{2}{*}{$-1$}
& \multirow{2}{*}{$-1$}
& \multirow{2}{*}{$-1$}
& \multirow{2}{*}{$-1$}
& VS & 322 $\pm$ 8.89 & 322 & 79.0321
& ZP & 19.39 $\pm$ 0.55 & 19.39 & 0.3025  \\
\multicolumn{12}{c}{}
& EE &  45.25 $\pm$ 0.75 & 45.25 & 0.5625
& LC & 4.33 $\pm$ 0.38 & 4.33 & 0.1444\\
\multirow{2}{*}{${\bf z}_{8}$}
& \multirow{2}{*}{$1$}
& \multirow{2}{*}{$1$}
& \multirow{2}{*}{$-1$}
& \multirow{2}{*}{$1$}
& \multirow{2}{*}{$1$}
& \multirow{2}{*}{$1$}
& \multirow{2}{*}{$-1$}
& \multirow{2}{*}{$-1$}
& \multirow{2}{*}{$-1$}
& \multirow{2}{*}{$1$}
& \multirow{2}{*}{$-1$}
& VS & 612 $\pm$ 7.82 & 612 & 61.1524
& ZP & 33.28 $\pm$ 0.87 & 33.28 & 0.7569  \\
\multicolumn{12}{c}{}
& EE &  83.20 $\pm$ 0.60 & 83.20 & 0.3600
& LC &7.68 $\pm$ 0.63 & 7.68 & 0.3969 \\
\multirow{2}{*}{${\bf z}_{9}$}
& \multirow{2}{*}{$-1$}
& \multirow{2}{*}{$-1$}
& \multirow{2}{*}{$-1$}
& \multirow{2}{*}{$1$}
& \multirow{2}{*}{$-1$}
& \multirow{2}{*}{$1$}
& \multirow{2}{*}{$1$}
& \multirow{2}{*}{$-1$}
& \multirow{2}{*}{$1$}
& \multirow{2}{*}{$1$}
& \multirow{2}{*}{$1$}
& VS & 112 $\pm$ 4.36 & 112 & 19.0096
& ZP & 20.79 $\pm$ 0.23 & 20.79 & 0.0529\\
\multicolumn{12}{c}{}
& EE & 55.61 $\pm$ 0.62 & 55.61 & 0.3844 
& LC & 5.27 $\pm$ 0.22 & 5.27 & 0.0484\\
\multirow{2}{*}{${\bf z}_{10}$}
& \multirow{2}{*}{$1$}
& \multirow{2}{*}{$-1$}
& \multirow{2}{*}{$1$}
& \multirow{2}{*}{$1$}
& \multirow{2}{*}{$-1$}
& \multirow{2}{*}{$1$}
& \multirow{2}{*}{$1$}
& \multirow{2}{*}{$1$}
& \multirow{2}{*}{$-1$}
& \multirow{2}{*}{$-1$}
& \multirow{2}{*}{$-1$}
& VS & 787 $\pm$ 12.12 & 787 & 146.8944
& ZP & 22.75 $\pm$ 1.28 & 22.75 & 1.6384 \\
\multicolumn{12}{c}{}
& EE & 56.11 $\pm$ 1.03 & 56.11 & 1.0609  
& LC & 10.09 $\pm$ 0.41 & 10.09 & 0.1681 \\
\multirow{2}{*}{${\bf z}_{11}$}
& \multirow{2}{*}{$1$}
& \multirow{2}{*}{$1$}
& \multirow{2}{*}{$-1$}
& \multirow{2}{*}{$-1$}
& \multirow{2}{*}{$-1$}
& \multirow{2}{*}{$1$}
& \multirow{2}{*}{$-1$}
& \multirow{2}{*}{$1$}
& \multirow{2}{*}{$1$}
& \multirow{2}{*}{$-1$}
& \multirow{2}{*}{$1$}
& VS & 258 $\pm$ 13.07 & 258 & 170.8249
& ZP & 35.71 $\pm$ 0.66 & 35.71 & 0.4356 \\
\multicolumn{12}{c}{}
& EE & 87.60 $\pm$ 0.60 & 87.60 & 0.0961
& LC & 8.05 $\pm$ 0.31 & 8.05 & 0.0961\\
\multirow{2}{*}{${\bf z}_{12}$}
& \multirow{2}{*}{$1$}
& \multirow{2}{*}{$-1$}
& \multirow{2}{*}{$1$}
& \multirow{2}{*}{$1$}
& \multirow{2}{*}{$1$}
& \multirow{2}{*}{$-1$}
& \multirow{2}{*}{$-1$}
& \multirow{2}{*}{$-1$}
& \multirow{2}{*}{$1$}
& \multirow{2}{*}{$-1$}
& \multirow{2}{*}{$1$}
& VS & 211 $\pm$ 11.36 & 211 & 129.0496
& ZP & 25.35 $\pm$ 0.84 & 25.35 & 0.7056 \\
\multicolumn{12}{c}{}
& EE & 61.71 $\pm$ 0.83 & 61.71 & 0.6889
& LC & 11.00 $\pm$ 0.50 & 11.00 & 0.2500 \\
\bottomrule
\end{tabular}
}
\end{table}

As shown in Table \ref{exam2_or}, according to the preceding definitions of $\bar{Y}^{obs}({\bf z}_j)$ and $s^2({\bf z}_j)$, The mean reported in Example \ref{exam2} corresponds to $\bar{Y}^{obs}({\bf z}_j)$, and the square of its standard deviation corresponds to $s^2({\bf z}_j)$.
The point estimate can be computed using (\ref{3.1}), while the confidence interval for Neymanian inference can be obtained from (\ref{NeyCI}). The final results are listed in Table \ref{exam_sum2}, and the corresponding plot is shown in Figure \ref{fig:exa 2}.

\begin{figure}[h]
\centering
\begin{tabular}{cc}
\includegraphics[width=0.5\textwidth]{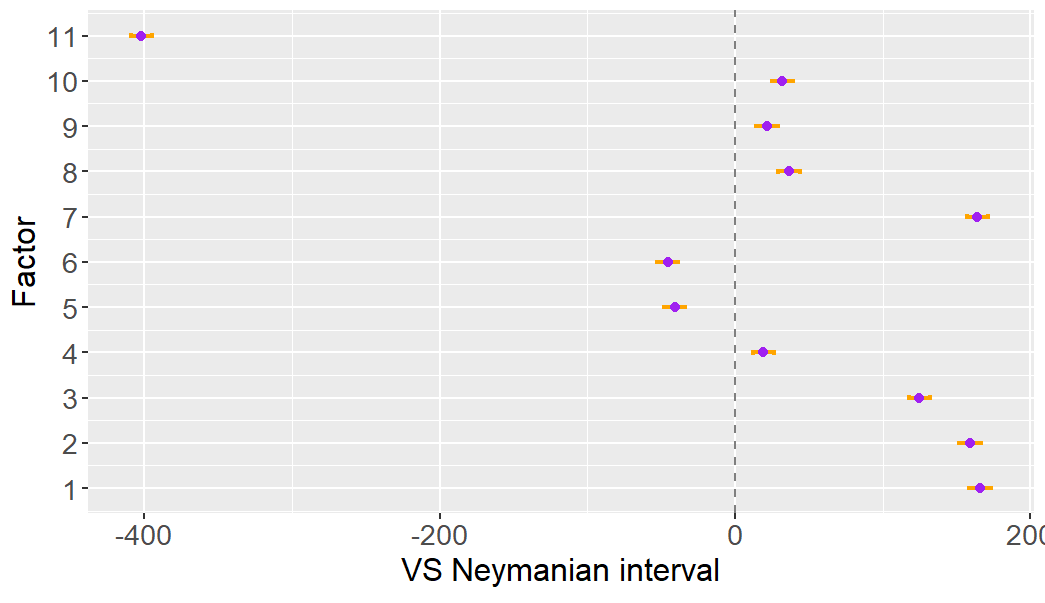} &
\includegraphics[width=0.5\textwidth]{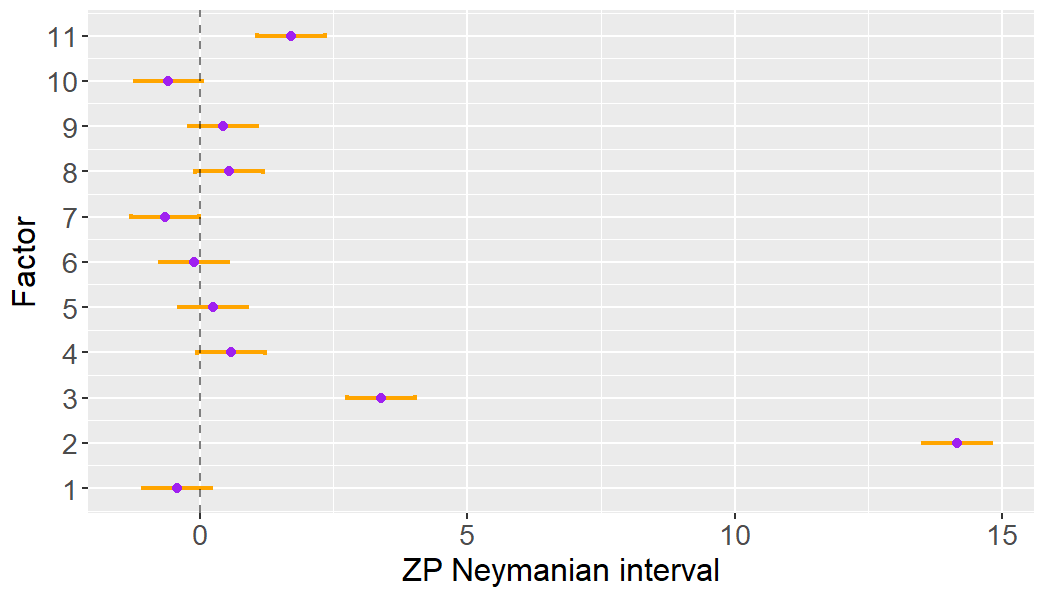} \\[0.5em]
\includegraphics[width=0.5\textwidth]{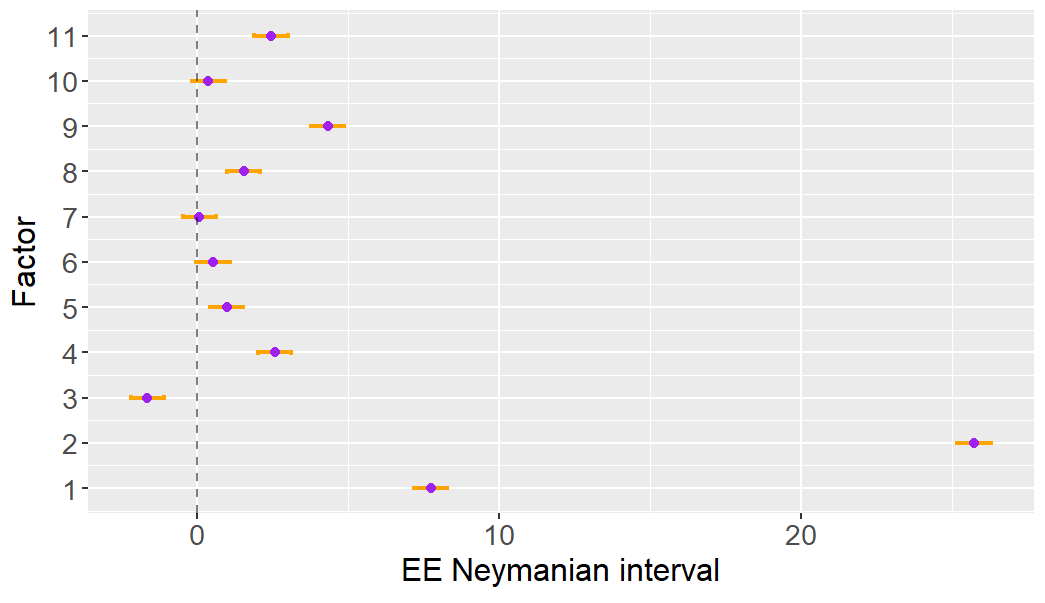} &
\includegraphics[width=0.5\textwidth]{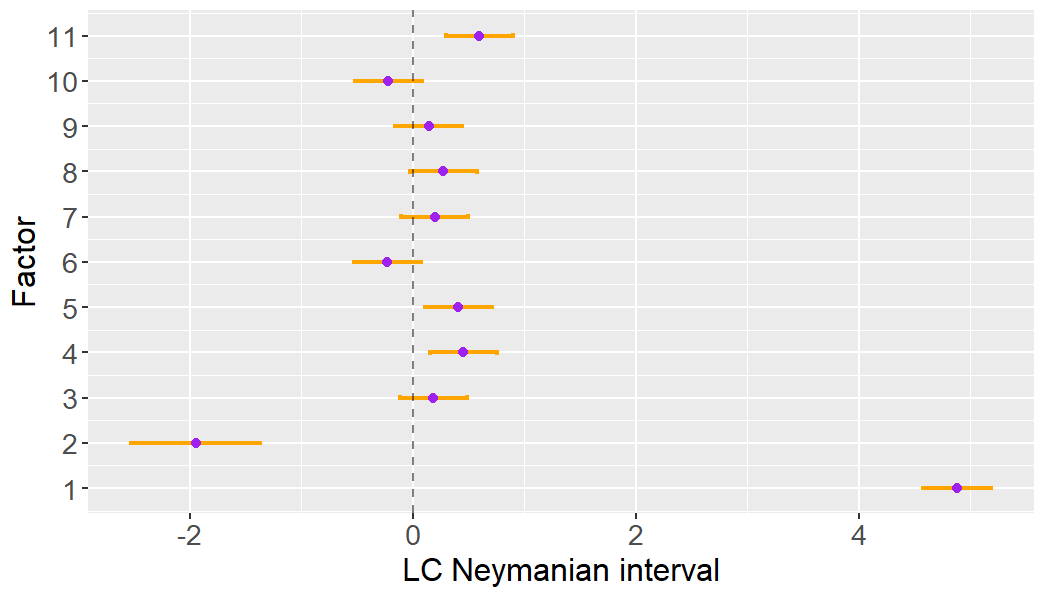}
\end{tabular}
\caption{95\% Neymanian confidence intervals for Example \ref{exam2}.} \label{fig:exa 2}
\end{figure}

\begin{table}[!http]
\centering
\caption{Summary of analysis of 95\% Neymainan intervals of PB design reported in Example \ref{exam2}.}
\footnotesize
\setlength{\tabcolsep}{0.1pt}
\label{exam_sum2}
\begin{tabular*}{\linewidth}{@{\extracolsep{\fill}}{c}crcrcrcr@{}}
\toprule
\multirow{2}{*}{Factor} 
& \multicolumn{2}{c}{VS} 
& \multicolumn{2}{c}{ZP} 
& \multicolumn{2}{c}{EE}
& \multicolumn{2}{c}{LC}\\
\cmidrule(lr){2-3}
\cmidrule(lr){4-5}
\cmidrule(lr){6-7}
\cmidrule(lr){8-9}
&\makecell{Point\\estimate} & \makecell{Neymaian\\interval} 
& \makecell{Point\\estimate} & \makecell{Neymaian\\interval} 
& \makecell{Point\\estimate} & \makecell{Neymaian\\interval} 
& \makecell{Point\\estimate} & \makecell{Neymaian\\interval} \\
\hline
1   &165.833 &$[158.6 ,173.1]$ &-0.425&$[-1.059 ,0.209 ]$  &7.745
 &$[7.191,8.299 ]$ &4.883 &$[4.582,5.184]$  \\
2   &159.167 & $[151.9,166.4 ]$ &14.148 & $[13.51,14.78]$ &25.728 & $[25.18,26.28]$ &-1.947 & $[-2.248, -1.646]$\\
3  & 124.833  & $[117.6,132.1 ]$ &3.395& $[2.761,4.029]$ &-1.632 &  $[-2.185 ,-1.078 ]$&0.183 & $[-0.118,0.484]$  \\
4  &19.167 &$[11.97,26.40]$ &0.588 &$[-0.046,1.223]$    &2.575 &$[2.021,3.129]$ &0.453  & $[0.152 ,0.754]$  \\
5  &-40.833 & $[-48.06,-33.60]$  &0.258  & $[-0.376,0.893]$  &0.985  & $[0.431 , 1.539  ]$  &0.410 & $[0.109,0.711 ]$  \\
6  & -45.5 & $[-52.73, -38.27 ]$ &-0.098 & $[-0.733,0.536 ]$ & 0.535 & $[-0.019,1.089]$ &-0.230 &$[-0.531,0.071]$  \\
7 &164.167  &$[156.9,171.4]$  &-0.652 &$[-1.286,-0.017 ]$  &0.082&$[-0.472,0.635]$  &0.197 & $[-0.104,0.498 ]$  \\
8   &36.50 & $[29.27,43.73]$ &0.552 & $[-0.083,1.186 ]$  &1.555& $[1.001,2.109]$ &0.273& $[-0.028 , 0.574]$  \\
9   & 21.833 & $[14.60,29.06 ]$  &0.435 & $[-0.199 ,1.069 ]$  & 4.335 &$$[3.781 ,4.889]$$ &0.143 &$[-0.158,0.444]$  \\
10 &32.167 &$[24.94,39.40]$ &-0.582 &$[-1.216 ,0.053 ]$  &0.382 
&$[-0.172 ,0.935 ]$ &-0.22 & $[-0.521,0.081]$  \\
11  &-401.833 & $[-409.0,-394.60]$  &1.708 & $[1.074,2.343]$ &2.452 & $[1.898,3.005]$  &0.597 
 & $[0.296,0.898]$  \\
\hline
\end{tabular*}
\end{table}
As shown in Figure \ref{fig:exa 2} and Table \ref{exam_sum2}, for the outcome variable VS, the causal effects of all factors are nonzero; for ZP, the causal effects of factors 2, 3, 7, and 11 are nonzero; for EE, the causal effects of factors 1-5, 8, 9 and 11 are nonzero; and for LC, the causal effects of factors 1, 2, 4, 5, and 11 are nonzero. For each of the four outcome variables, the three factors with the largest absolute average causal effects were selected. Specifically, the selected factors were 11, 1, and 7 for VS; 2, 3, and 11 for ZP; 1, 2, and 9 for EE; and 1, 2, and 11 for LC. 

To satisfy the requirements that VS be as small as possible and EE be as large as possible, we selected the candidate factors for the next stage by considering both the directions and magnitudes of the effects reported in Table 1. For VS, the three factors with the largest absolute causal effects were 11, 1, and 7; among them, only 11 had a negative effect, consistent with the objective of minimizing VS, and was therefore retained for the next stage. Its effect value was -401.83, indicating that when factor 11 changes from the low level to the high level, the mean response of VS decreases by 401.83. For ZP, the effects of factors 2, 3, and 11 were all positive, so all three were retained. For EE and LC, since only the effect magnitude was considered, the three factors with the largest absolute causal effects were selected for each outcome variable. Finally, taking the union of the selected factors across all outcome variables yielded factors 1, 2, 3, 9, and 11 as candidate factors for the next stage, consistent with the original study's conclusion.

\begin{table}[h]
\centering
\caption{Original design matrix and  data for Example \ref{exam3}.}
\label{exam3_or}
\tiny
\setlength{\tabcolsep}{1.2pt}
\renewcommand{\arraystretch}{1.08}
\resizebox{\textwidth}{!}{
\begin{tabular}{r*{11}{r}rrrrrrrr}
\toprule
\multirow{2}{*}{${\bf z}_{j}$}
& \multicolumn{8}{c}{Design factors}
& \multicolumn{4}{c}{Outcome variable column 1}
& \multicolumn{4}{c}{Outcome variable column 2} \\
\cmidrule(lr){2-9}
\cmidrule(lr){10-13}
\cmidrule(lr){14-17}
& $1$ & $2$ & $3$ & $4$ & $5$ & $6$
& $7$ & $8$ 
& variable & mean $\pm$ SD & $\bar{Y}^{obs}({\bf z}_j)$ & $s^2({\bf z}_j)$
& variable & mean $\pm$ SD& $\bar{Y}^{obs}({\bf z}_j)$ & $s^2({\bf z}_j)$ \\
\midrule
\multirow{2}{*}{${\bf z}_{1}$}
& \multirow{2}{*}{$1$} & 
\multirow{2}{*}{$1$} & 
\multirow{2}{*}{$-1$} & 
\multirow{2}{*}{$1$} & 
\multirow{2}{*}{$1$} & 
\multirow{2}{*}{$1$} & 
\multirow{2}{*}{$-1$} & 
\multirow{2}{*}{$-1$}
& MPS & $411.5 \pm 3.1$ & 411.5 &  9.61
& ZP & $-3.15 \pm 0.07$ &  -3.15 &$4.9\times10^{-3}$\\
\multicolumn{9}{c}{}
& EE & $65.2 \pm 2.6$ & 65.2 & 6.76 
& $DE_{30}$ & $25.2 \pm 0.8$ & 25.2 & 0.64   \\
\multirow{2}{*}{${\bf z}_{2}$}&
\multirow{2}{*}{$-1$} & 
\multirow{2}{*}{$1$} & 
\multirow{2}{*}{$1$} & 
\multirow{2}{*}{$-1$} & 
\multirow{2}{*}{$1$} & 
\multirow{2}{*}{$1$} & 
\multirow{2}{*}{$1$} & 
\multirow{2}{*}{$-1$} 
& MPS & $531.4 \pm 4.6$ & 531.4 & 21.16
& ZP & $-11.20 \pm 0.09$ & -11.20 & $8.1\times10^{-3}$ \\
\multicolumn{9}{c}{}
& EE & $67.3 \pm 1.9$ & 67.3 & 3.61 
& $DE_{30}$ & $30.9 \pm 1.5$ & 30.9 & 2.25 \\
\multirow{2}{*}{${\bf z}_{3}$}
& \multirow{2}{*}{$1$} & 
\multirow{2}{*}{$-1$} & 
\multirow{2}{*}{$1$} & 
\multirow{2}{*}{$1$} & 
\multirow{2}{*}{$-1$} & 
\multirow{2}{*}{$1$} & 
\multirow{2}{*}{$1$} & 
\multirow{2}{*}{$1$} 
& MPS & $477.5 \pm 3.6$ & 477.5 & 12.96
& ZP & $-13.10 \pm 0.22$ & -13.10 &$4.84\times10^{-2}$\\
\multicolumn{9}{c}{}
& EE &$47.2 \pm 1.5$ & 47.2 & 2.25    
& $DE_{30}$ &  $77.8 \pm 2.8$ & 77.8 & 7.84\\
\multirow{2}{*}{${\bf z}_{4}$}
& \multirow{2}{*}{$-1$} &
\multirow{2}{*}{$1$} &
\multirow{2}{*}{$-1$} &
\multirow{2}{*}{$1$} &
\multirow{2}{*}{$1$} &
\multirow{2}{*}{$-1$} &
\multirow{2}{*}{$1$} &
\multirow{2}{*}{$1$} 
& MPS &$278.9 \pm 2.2$ & 278.9 &  4.84
& ZP &$-2.35 \pm 0.06$ &  -2.35 &$3.6\times10^{-3}$\\
\multicolumn{9}{c}{}
& EE & $67.3 \pm 2.3$ & 67.3 & 5.29
& $DE_{30}$ & $31.3 \pm 1.1$ & 31.3 & 1.21 \\
\multirow{2}{*}{${\bf z}_{5}$}
& \multirow{2}{*}{$-1$} &
\multirow{2}{*}{$-1$} &
\multirow{2}{*}{$1$} &
\multirow{2}{*}{$-1$} &
\multirow{2}{*}{$1$} &
\multirow{2}{*}{$1$} &
\multirow{2}{*}{$-1$} &
\multirow{2}{*}{$1$} 
& MPS &$552.6 \pm 3.7$ & 552.6 & 13.69
& ZP & $-10.50 \pm 0.09$ & -10.50 & $8.1\times10^{-3}$\\
\multicolumn{9}{c}{}
& EE & $58.3 \pm 2.2$ & 58.3 & 4.84
& $DE_{30}$ &$73.4 \pm 1.4$ & 73.4 & 1.96 \\
\multirow{2}{*}{${\bf z}_{6}$}
& \multirow{2}{*}{$-1$} &
\multirow{2}{*}{$-1$} &
\multirow{2}{*}{$-1$} &
\multirow{2}{*}{$1$} &
\multirow{2}{*}{$-1$} &
\multirow{2}{*}{$1$} &
\multirow{2}{*}{$1$} &
\multirow{2}{*}{$-1$} 
& MPS &$387.7 \pm 2.6$ & 387.7 &  6.76
& ZP & $-3.82 \pm 0.03$ &  -3.82 &$9\times10^{-4}$\\
\multicolumn{9}{c}{}
& EE & $84.2 \pm 1.8$ & 84.2 & 3.24
& $DE_{30}$ & $85.1 \pm 1.3$ & 85.1 & 1.69 \\
\multirow{2}{*}{${\bf z}_{7}$}
& \multirow{2}{*}{$1$} &
\multirow{2}{*}{$-1$} &
\multirow{2}{*}{$-1$} &
\multirow{2}{*}{$-1$} &
\multirow{2}{*}{$1$} &
\multirow{2}{*}{$-1$} &
\multirow{2}{*}{$1$} &
\multirow{2}{*}{$1$}
& MPS & $422.3 \pm 3.7$ & 422.3 & 13.69
& ZP & $-1.04 \pm 0.02$ &  -1.04 &$4\times10^{-4}$ \\
\multicolumn{9}{c}{}
& EE & $82.1 \pm 1.7$ & 82.1 & 2.89
& $DE_{30}$  & $61.3 \pm 2.2$ & 61.3 & 4.84\\
\multirow{2}{*}{${\bf z}_{8}$}
& \multirow{2}{*}{$1$} &
\multirow{2}{*}{$1$} &
\multirow{2}{*}{$-1$} &
\multirow{2}{*}{$-1$} &
\multirow{2}{*}{$-1$} &
\multirow{2}{*}{$1$} &
\multirow{2}{*}{$-1$} &
\multirow{2}{*}{$1$} 
& MPS & $591.2 \pm 4.7$ & 591.2 & 22.09
& ZP & $-3.32 \pm 0.04$ &  -3.32 &$1.6\times10^{-3}$\\
\multicolumn{9}{c}{}
& EE & $46.3 \pm 2.1$ & 46.3 & 4.41
& $DE_{30}$ &$41.1 \pm 1.7$ & 41.1 & 2.89 \\
\multirow{2}{*}{${\bf z}_{9}$}
& \multirow{2}{*}{$1$} &
\multirow{2}{*}{$1$} &
\multirow{2}{*}{$1$} &
\multirow{2}{*}{$-1$} &
\multirow{2}{*}{$-1$} &
\multirow{2}{*}{$-1$} &
\multirow{2}{*}{$1$} &
\multirow{2}{*}{$-1$} 
& MPS &$514.2 \pm 4.4$ & 514.2 & 19.36
& ZP &$-12.20 \pm 0.19$ & -12.20 &$3.61\times10^{-2}$\\
\multicolumn{9}{c}{}
& EE &$29.9 \pm 1.5$ & 29.9 & 2.25
& $DE_{30}$ &$42.3 \pm 1.6$ & 42.3 & 2.56\\
\multirow{2}{*}{${\bf z}_{10}$}
& \multirow{2}{*}{$-1$} &
\multirow{2}{*}{$1$} &
\multirow{2}{*}{$1$} &
\multirow{2}{*}{$1$} &
\multirow{2}{*}{$-1$} &
\multirow{2}{*}{$-1$} &
\multirow{2}{*}{$-1$} &
\multirow{2}{*}{$1$} 
& MPS & $379.8 \pm 2.6$ & 379.8 &  6.76
& ZP & $-10.90 \pm 0.13$ & -10.90 &$1.69\times10^{-2}$ \\
\multicolumn{9}{c}{}
& EE & $28.3 \pm 1.5$ & 28.3 & 2.25   
& $DE_{30}$ & $47.9 \pm 2.1$ & 47.9 & 4.41 \\
\multirow{2}{*}{${\bf z}_{11}$}
& \multirow{2}{*}{$1$} &
\multirow{2}{*}{$-1$} &
\multirow{2}{*}{$1$} &
\multirow{2}{*}{$1$} &
\multirow{2}{*}{$1$} &
\multirow{2}{*}{$-1$} &
\multirow{2}{*}{$-1$} &
\multirow{2}{*}{$-1$} 
& MPS & $427.2 \pm 3.3$ & 427.2 & 10.89
& ZP & $-14.30 \pm 0.22$ & -14.30 &$4.84\times10^{-2}$  \\
\multicolumn{9}{c}{}
& EE & $59.9 \pm 2.4$ & 59.9 & 5.76 
& $DE_{30}$ &$70.6 \pm 1.5$ & 70.6 & 2.25\\
\multirow{2}{*}{${\bf z}_{12}$}
& \multirow{2}{*}{$-1$} &
\multirow{2}{*}{$-1$} &
\multirow{2}{*}{$-1$} &
\multirow{2}{*}{$-1$} &
\multirow{2}{*}{$-1$} &
\multirow{2}{*}{$-1$} &
\multirow{2}{*}{$-1$} &
\multirow{2}{*}{$-1$} 
& MPS & $398.9 \pm 2.6$ & 398.9 &  6.76 
& ZP & $-1.24 \pm 0.02$ &  -1.24 &$4\times10^{-4}$ \\
\multicolumn{9}{c}{}
& EE & $70.1 \pm 1.9$ & 70.1 & 3.61 
& $DE_{30}$ & $86.2 \pm 1.2$ & 86.2 & 1.44 \\
\bottomrule
\end{tabular}
}
\end{table}

{\bf Example \ref{exam3}.} (Cont.)  \cite{dhat2017risk} provided $\pm$ standard deviation of outcomes in Table \ref{exam3_or}. The PB design with 12 runs is incomplete and non-geometric. Unlike the previous two examples, after identifying the main factors through the PB design, they further examined their interactions using a $2^3$ full factorial design. The original paper is based on a statistical inference framework, a PB design is first employed, along with regression analysis and ANOVA, to identify the significant factor, and Pareto charts are used to identify the key factors with relatively large contributions. Subsequently, based on the screening results, a $2^3$ full factorial design was applied, and regression models, ANOVA, and Lenth's method were used to further analyze the significance of the main factors and their interactions. 

Based on Table \ref{exam3_or}, according to the preceding definitions of $\bar{Y}^{obs}({\bf z}_j)$ and $s^2({\bf z}_j)$, the mean reported in the original article corresponds to $\bar{Y}^{obs}({\bf z}_j)$, and the square of its standard deviation corresponds to $s^2({\bf z}_j)$. The point estimate can be computed using (\ref{3.1}), while the confidence interval for Neymanian inference can be obtained from (\ref{NeyCI}). The final results are listed in Table \ref{exam3_sum}, and the corresponding plot is shown in Figure \ref{fig:Example 2}.

\begin{figure}[h]
\centering
\begin{tabular}{cc}
\includegraphics[width=0.5\textwidth]{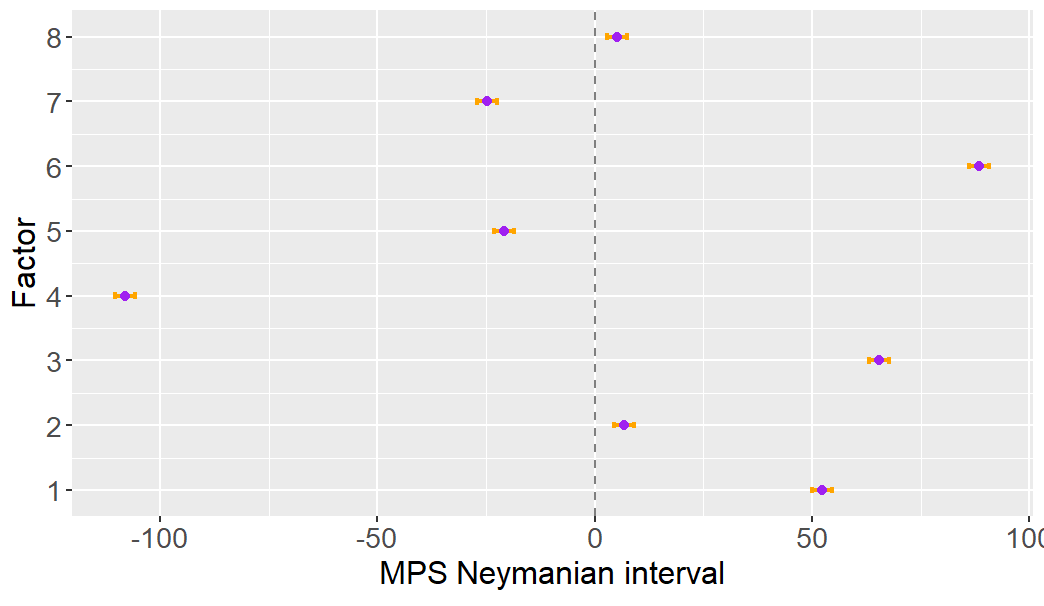} &
\includegraphics[width=0.5\textwidth]{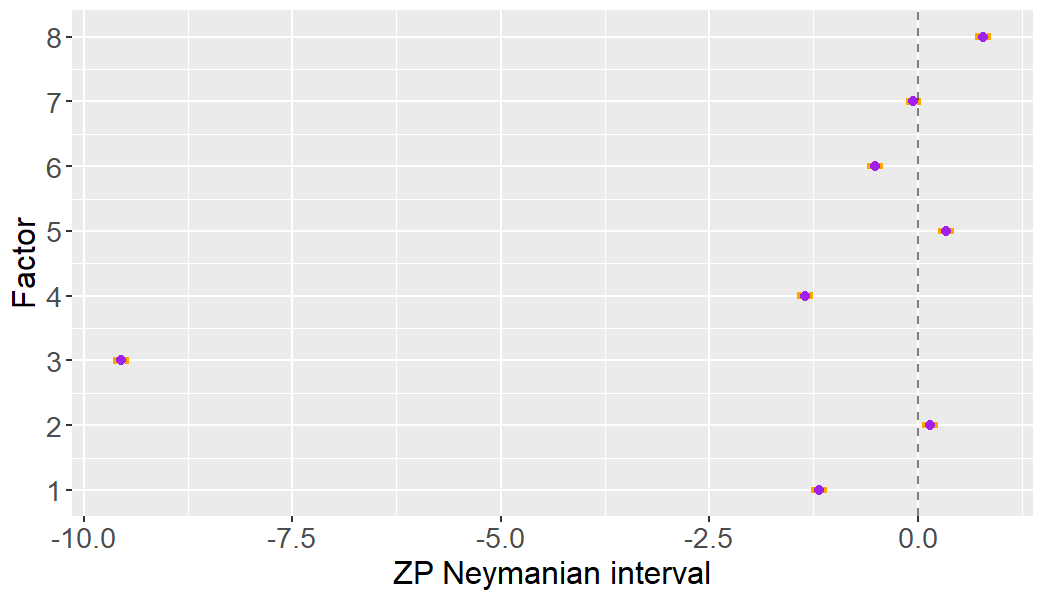} \\[0.5em]
\includegraphics[width=0.5\textwidth]{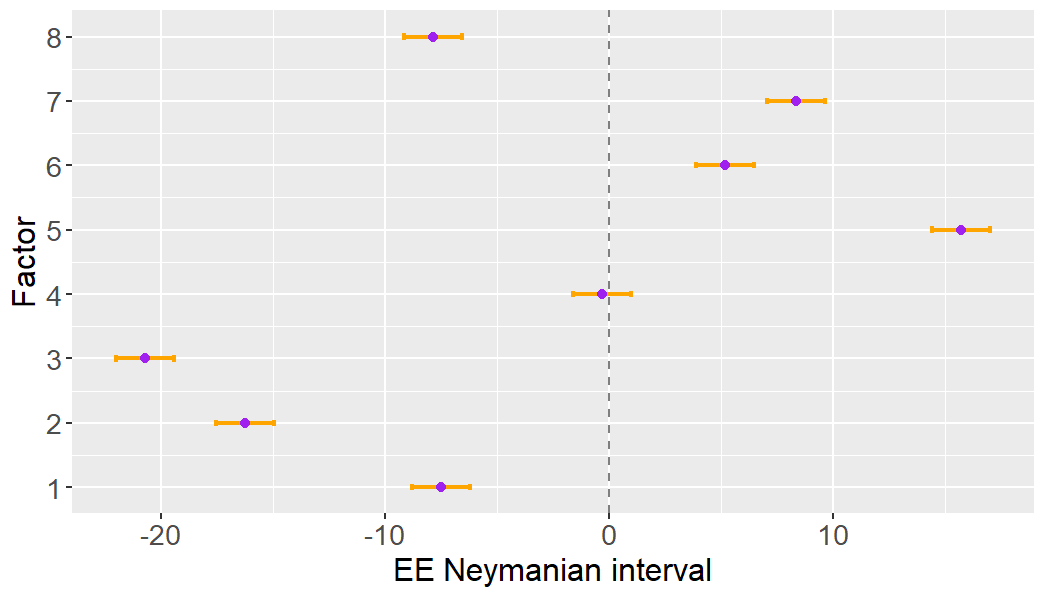} &
\includegraphics[width=0.5\textwidth]{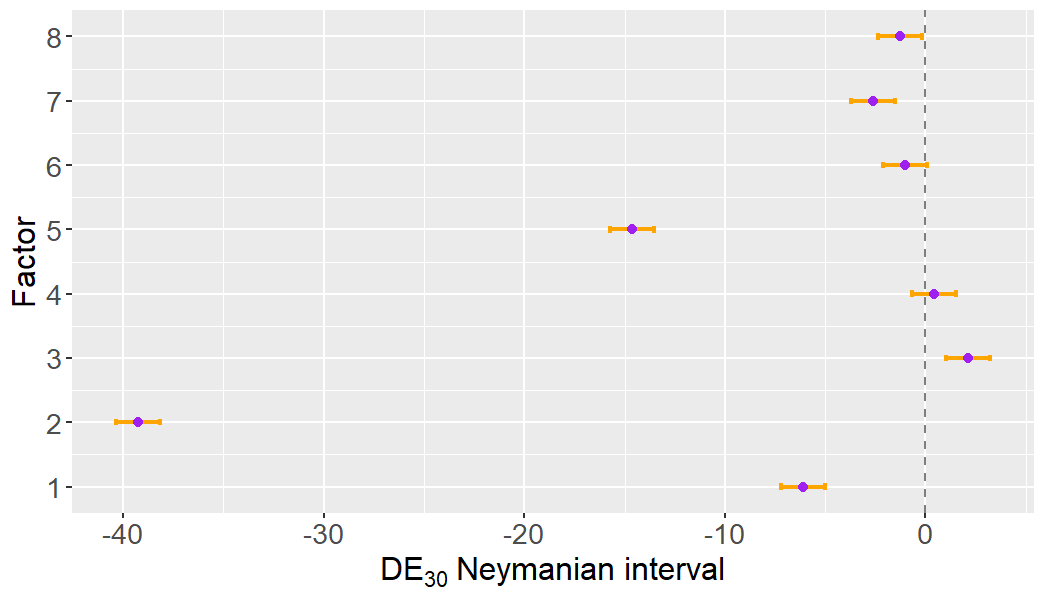}
\end{tabular}
\caption{95\% Neymanian confidence intervals of PB design for Example \ref{exam3}.} \label{fig:Example 2}
\end{figure}

\begin{table}[!http]
\caption{Summary of analysis of 95\% Neymainan intervals of PB design reported in Example \ref{exam3}.}
\footnotesize
\setlength{\tabcolsep}{0.1pt}
\label{exam3_sum}
\begin{tabular*}{\linewidth}{@{\extracolsep{\fill}}{c}crcrcrcr@{}}
\toprule
\multirow{2}{*}{Factor} 
& \multicolumn{2}{c}{MPS} 
& \multicolumn{2}{c}{ZP} 
& \multicolumn{2}{c}{EE}
& \multicolumn{2}{c}{$DE_{30}$}\\
\cmidrule(lr){2-3}
\cmidrule(lr){4-5}
\cmidrule(lr){6-7}
\cmidrule(lr){8-9}
&\makecell{Point\\estimate} & \makecell{Neymaian\\interval} 
& \makecell{Point\\estimate} & \makecell{Neymaian\\interval} 
& \makecell{Point\\estimate} & \makecell{Neymaian\\interval} 
& \makecell{Point\\estimate} & \makecell{Neymaian\\interval} \\
\hline
1 & 52.433 & [50.13, 54.73] & -1.183 & [-1.254, -1.112] & -7.483 & [-8.779, -6.188] & -6.083 & [-7.183, -4.984] \\
2 & 6.800  & [4.501, 9.099]   & 0.147  & [0.076, 0.218]   & -16.250 & [-17.55, -14.96] & -39.283 & [-40.38, -38.18] \\
3 & 65.367 & [63.07, 67.67] & -9.547 & [-9.618, -9.476] & -20.717 & [-22.01, -19.42] & 2.117 & [1.017, 3.216] \\
4 & -108.000 & [-110.3, -105.7] & -1.353 & [-1.424, -1.282] & -0.317 & [-1.612, 0.979] & 0.450 & [-0.649, 1.549] \\
5 & -20.900 & [-23.20, -18.60] & 0.340 & [0.269, 0.411] & 15.683 & [14.39, 16.98] & -14.617 & [-15.72, -13.52] \\
6 & 88.433 & [86.13, 90.73] & -0.510 & [-0.581, -0.439] & 5.150 & [3.855, 6.445] & -1.017 & [-2.116, 0.083] \\
7 & -24.867 & [-27.17, -22.57] & -0.050 & [-0.121, 0.021] & 8.317 & [7.021, 9.612] & -2.617 & [-3.716, -1.517] \\
8 & 5.233 & [2.934, 7.532] & 0.783 & [0.712, 0.854] & -7.850 & [-9.145, -6.555] & -1.250 & [-2.349, -0.151] \\
\hline
\end{tabular*}%
\end{table}

As shown in Figure \ref{fig:Example 2} and Table \ref{exam3_sum}, for the outcome variable MPS, the causal effects of all factors are nonzero; for ZP, the causal effects of factors 1-6, and 8 are nonzero; for EE, the causal effects of factors 1-3 and 4-8 are nonzero; and for $DE_{30}$, the causal effects of factors 1-3, 5,7 and 8 are nonzero. For each of the four outcome variables, the three factors with the largest absolute average causal effects were selected. Specifically, the selected factors were 4, 6, and 3 for MPS; 3, 4, and 1 for ZP; 3, 2, and 5 for EE; and 2, 5, and 1 for $DE_{30}$.

Next, by setting the objective to minimize MPS and maximize EE, we selected the factors to include in the next optimization stage based on the effect directions. Since factors 2 and 3 represent material types, and their effects on MPS are positive while their average causal effects on EE are negative, they were set to the low levels ES100 and PVA, respectively, to meet the preparation requirements. For factor 1, since changing it from the low level to the high level would increase MPS, it was also fixed at the low level. After excluding factors 1, 2, and 3 and combining the previously selected candidate factors based on the magnitudes of their causal effects, factors 4, 5, and 6 were included in the next optimization stage. In addition, factors 7 and 8 were fixed at the low levels for cost-saving considerations. The resulting selection of factors 4, 5, and 6 for the next stage is consistent with that reported in the original study.

In the original paper, these three factors were redefined as the PVA concentration in the aqueous phase, the volume of the aqueous phase, and the amount of ES100 in the organic phase. The three response variables under consideration were MPS, EE, and PDI. A $2^3$ full factorial design was then employed to investigate further the effects of these three factors as well as their interactions. The original data for the $2^3$ full factorial design are presented in Table \ref{tab:$2^3$ factorial design}. 
\begin{table}[h]
\centering
\caption{Original $2^3$ factorial design matrix and raw data for Example \ref{exam3}.}
\label{tab:$2^3$ factorial design}
\footnotesize
\setlength{\tabcolsep}{3pt}
\renewcommand{\arraystretch}{0.95}
\begin{tabular*}{\textwidth}{@{\extracolsep{\fill}}c*{3}{r}*{3}{r}*{3}{r}*{3}{r}@{}}
\toprule
\multirow{2}{*}{$\bf{z}_j$}
& \multicolumn{3}{c}{Factor}
& \multicolumn{3}{c}{MPS}
& \multicolumn{3}{c}{EE}
& \multicolumn{3}{c}{PDI} \\
\cmidrule(lr){2-4}
\cmidrule(lr){5-7}
\cmidrule(lr){8-10}
\cmidrule(lr){11-13}
& 1 & 2 & 3& mean $\pm$ SD
& $\bar{Y}^{obs}({\bf z}_j)$ & $s^2({\bf z}_j)$& mean $\pm$ SD
&  $\bar{Y}^{obs}({\bf z}_j)$ & $s^2({\bf z}_j)$& mean $\pm$ SD
&  $\bar{Y}^{obs}({\bf z}_j)$ & $s^2({\bf z}_j)$ \\
\midrule
${\bf z}_1$ & -1 & -1 & -1 & $420.3 \pm 3.6$ & 420.3 & 12.96 & $69.2 \pm 2.4$ & 69.2 & 5.76 & $0.32 \pm 0.010$ & 0.32 &$1\times10^{-4}$\\
${\bf z}_2$ &  1 & -1 & -1 & $210.6 \pm 2.1$ & 210.6 &  4.41 & $76.5 \pm 2.2$ & 76.5 & 4.84 & $0.22 \pm 0.021$ & 0.22 &$4.41\times10^{-4}$ \\
${\bf z}_3$ & -1 &  1 & -1 & $430.6 \pm 3.2$ & 430.6 & 10.24 & $65.5 \pm 2.5$ & 65.5 & 6.25 & $0.33 \pm 0.013$ & 0.33 &$1.69\times10^{-4}$\\
${\bf z}_4$ &  1 &  1 & -1 & $269.2 \pm 2.7$ & 269.2 &  7.29 & $74.3 \pm 2.9$ & 74.3 & 8.41 & $0.29 \pm 0.014$ & 0.29 &$1.96\times10^{-4}$ \\
${\bf z}_5$ & -1 & -1 &  1 & $450.2 \pm 3.1$ & 450.2 &  9.61 & $72.3 \pm 2.4$ & 72.3 & 5.76 & $0.32 \pm 0.011$ & 0.32 &$1.21\times10^{-4}$\\
${\bf z}_6$ &  1 & -1 &  1 & $305.5 \pm 2.9$ & 305.5 &  8.41 & $84.4 \pm 1.3$ & 84.4 & 1.69 & $0.34 \pm 0.013$ & 0.34 &$1.69\times10^{-4}$\\
${\bf z}_7$ & -1 &  1 &  1 & $465.7 \pm 3.8$ & 465.7 & 14.44 & $70.3 \pm 2.7$ & 70.3 & 7.29 & $0.33 \pm 0.009$ & 0.33 &$8.1\times10^{-5}$\\
${\bf z}_8$ &  1 &  1 &  1 & $233.6 \pm 2.4$ & 233.6 &  5.76 & $82.0 \pm 1.2$ & 82.0 & 1.44 & $0.32 \pm 0.007$ & 0.32 &$4.9 \times 10^{-5}$\\
\bottomrule
\end{tabular*}
\end{table}

Since the $2^3$ full factorial design is a two-level orthogonal array, according to Remark \ref{rem}, Neymanian confidence intervals can also be constructed for the $2^3$ full factorial design using (\ref{NeyCI}). The results are presented in Table \ref{exam32^3_sum}, and their graphical display is provided in Figure \ref{fig:Example 2}.

\begin{figure}[h]
\centering
\begin{tabular}{cc}
\includegraphics[width=0.45\textwidth]{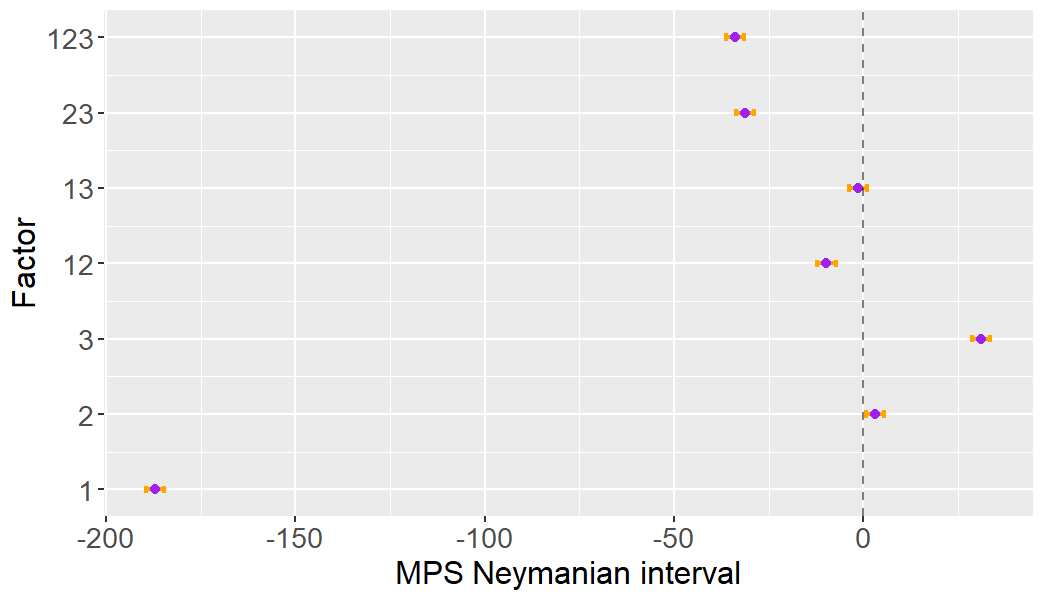} &
\includegraphics[width=0.45\textwidth]{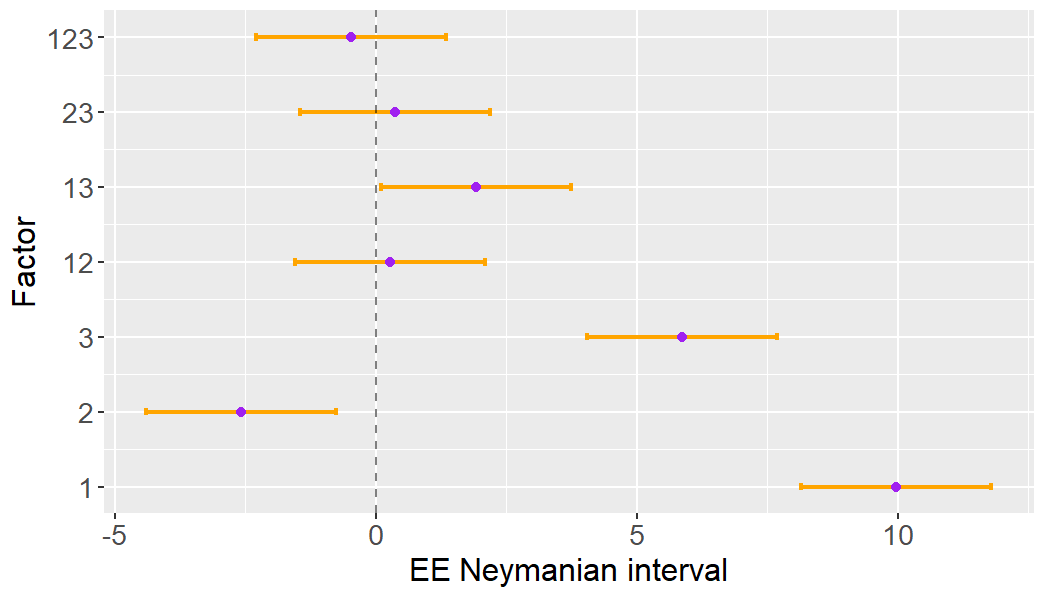} \\[2mm]
\multicolumn{2}{c}{
\includegraphics[width=0.45\textwidth]{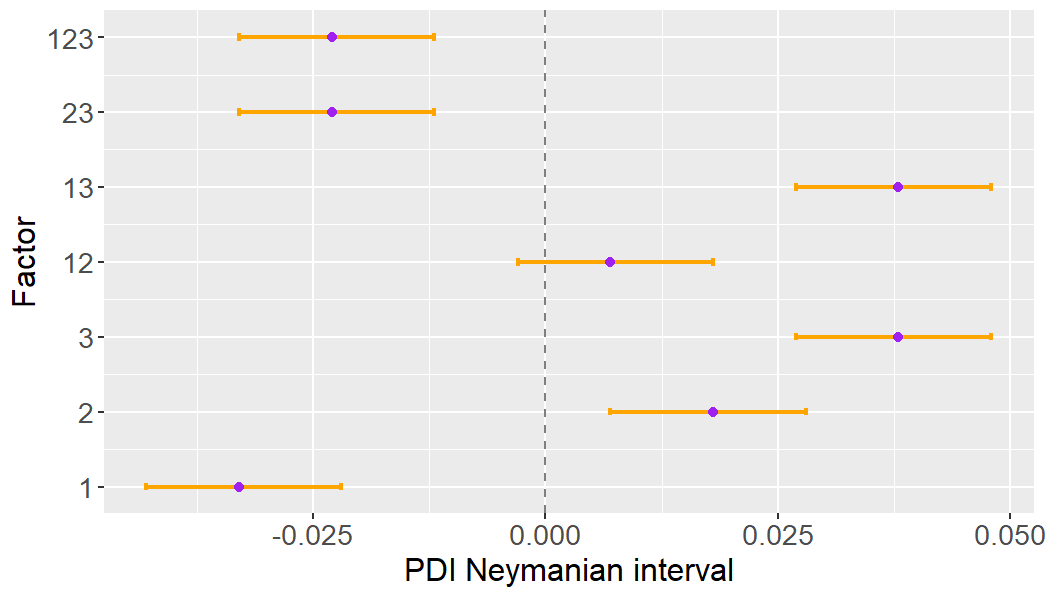}
}
\end{tabular}
\caption{95\% Neymanian confidence intervals of $2^3$ design for Example \ref{exam3}.} 
\label{fig:Example 2}
\end{figure}

\begin{table}[!http]
\caption{Summary of analysis of 95\% Neymainan intervals of $2^3$ design reported in Example \ref{exam3}.}
\footnotesize
\setlength{\tabcolsep}{0.1pt}
\label{exam32^3_sum}
\begin{tabular*}{\linewidth}{@{\extracolsep{\fill}}{c}crcrcr@{}}
\toprule
\multirow{2}{*}{Factor} 
& \multicolumn{2}{c}{MPS} 
& \multicolumn{2}{c}{EE} 
& \multicolumn{2}{c}{PDI}
\\
\cmidrule(lr){2-3}
\cmidrule(lr){4-5}
\cmidrule(lr){6-7}
&\makecell{Point\\estimate} & \makecell{Neymaian\\interval} 
& \makecell{Point\\estimate} & \makecell{Neymaian\\interval} 
& \makecell{Point\\estimate} & \makecell{Neymaian\\interval} \\
\hline
$1$ & -186.96&[-189.4,-184.6] & 9.975 & [8.154, 11.77] & -0.033 & [-0.043, -0.022] \\
$2$ & 3.125& [0.706, 5.544]       & -2.575 & [-4.396, -0.754] & 0.018 & [0.007, 0.028] \\
$3$ & 31.075   & [28.66,33.49]     & 5.875  & [4.054, 7.696]   & 0.038 & [0.027, 0.048] \\
$12$ & -9.775   & [-12.19, -7.356]    & 0.275  & [-1.546, 2.096]  & 0.007 & [-0.003, 0.018] \\
$13$ & -1.425   & [-3.844, 0.994]      & 1.925  & [0.104, 3.746]   & 0.038 & [0.027, 0.048] \\
$23$ & -31.325  & [-33.74, -28.91]   & 0.375  & [-1.446, 2.196]  & -0.023 & [-0.033, -0.012] \\
$123$& -33.925  & [-36.34, -31.51]   & -0.475 & [-2.296, 1.346]  & -0.023 & [-0.033, -0.012] \\
\hline
\end{tabular*}%
\end{table}

Main factors and their interactions $1$, $2$, $3$, $12$, $23$, and $123$ for MPS, factors $1$, $2$, $3$, and $13$ for EE, and factors $1$, $2$, $3$, $13$, $23$, and $123$ for PDI had confidence intervals indicate average causal effects. Furthermore, the main factors and their interactions reported in the original article were included among all average causal effects identified in this study.

\begin{table}[h]
\centering
\caption{Raw data for Example \ref{exam4}.}
\label{exam4_org}
\footnotesize
\setlength{\tabcolsep}{3pt}
\renewcommand{\arraystretch}{0.95}
\begin{tabular*}{\textwidth}{@{\extracolsep{\fill}}cc*{10}{r}rr@{}}
\toprule
\multicolumn{2}{c}{unit}
& \multicolumn{10}{c}{Factor}
& \multicolumn{2}{c}{Current} \\
\cmidrule(lr){1-2}
\cmidrule(lr){3-12}
\cmidrule(lr){13-14}
$i$ &$i'$& 1 & 2 & 3 & 4 & 5 & 6 & 7 & 8 & 9 & 10
& $Y_i^{obs}$ & $Y_{i'}^{obs}$ \\
\midrule
1  & 13 &  1 & -1 &  1 &  1 &  1 & -1 & -1 & -1 &  1 & -1 & 0.676 & 0.727 \\
2  & 14 &  1 &  1 &  1 & -1 & -1 & -1 &  1 & -1 &  1 &  1 & 2.913 & 4.020 \\
3  & 15 & -1 &  1 &  1 &  1 & -1 & -1 & -1 &  1 & -1 &  1 & 4.121 & 4.649 \\
4  & 16 &  1 & -1 & -1 & -1 &  1 & -1 &  1 &  1 & -1 &  1 & 0.678 & 0.841 \\
5  & 17 &  1 &  1 & -1 & -1 & -1 &  1 & -1 &  1 &  1 & -1 & 1.679 & 1.893 \\
6  & 18 & -1 & -1 & -1 &  1 & -1 &  1 &  1 & -1 &  1 &  1 & 2.039 & 2.210 \\
7  & 19 &  1 & -1 &  1 &  1 & -1 &  1 &  1 &  1 & -1 & -1 & 2.040 & 3.141 \\
8  & 20 & -1 &  1 &  1 & -1 &  1 &  1 &  1 & -1 & -1 & -1 & 1.369 & 1.898 \\
9  & 21 & -1 & -1 & -1 & -1 & -1 & -1 & -1 & -1 & -1 & -1 & 3.008 & 3.821 \\
10 & 22 & -1 & -1 &  1 & -1 &  1 &  1 & -1 &  1 &  1 &  1 & 0.715 & 0.839 \\
11 & 23 & -1 &  1 & -1 &  1 &  1 & -1 &  1 &  1 &  1 & -1 & 5.533 & 5.981 \\
12 & 24 &  1 &  1 & -1 &  1 &  1 &  1 & -1 & -1 & -1 &  1 & 4.870 & 5.622 \\
\bottomrule
\end{tabular*}
\end{table}

{\bf Example \ref{exam4}.} (Cont.) \cite{hardianto2023identification} provided two replicate observed outcomes, as shown in Table \ref{exam4_org}. The PB design with 12 runs is incomplete and non-geometric. 
Units in the first and second replicate trials are indexed by $i=1,\ldots,12$ and $i'=13,\ldots,24$, respectively. The corresponding observed outcomes are denoted by $Y_i^{obs}$ and $Y_{i'}^{obs}$. The point estimate $\hat\tau(k)$ calculated using (\ref{Yobs}), and the confidence interval for $\hat\tau(k)$ under Neymanian inference obtained based on (\ref{cor}) and (\ref{NeyCI}). Because there are only 10 factors, the original design matrix does not satisfy the complete PB design structure required by Algorithm \ref{alg1}. To apply the algorithm under Fisherian inference, we add a dummy factor column, denoted by $g_{11}$, and denote the resulting complete design matrix by ${\bf G}$. The dummy factor $11$ is introduced solely to complete the design and is assigned an associated effect of 0. Algorithm \ref{alg1} is then applied to ${\bf G}$ to construct the confidence interval for $\hat\tau(k)$ under Fisherian inference. The point estimates and confidence intervals for Neymanian and Fisherian inferences are summarized in Table \ref{exam4_sum}, and the plots are shown in Figure \ref{exam4_figure}.

\begin{figure}[h]
\centering
\begin{tabular}{cc}
\includegraphics[width=0.6\textwidth]{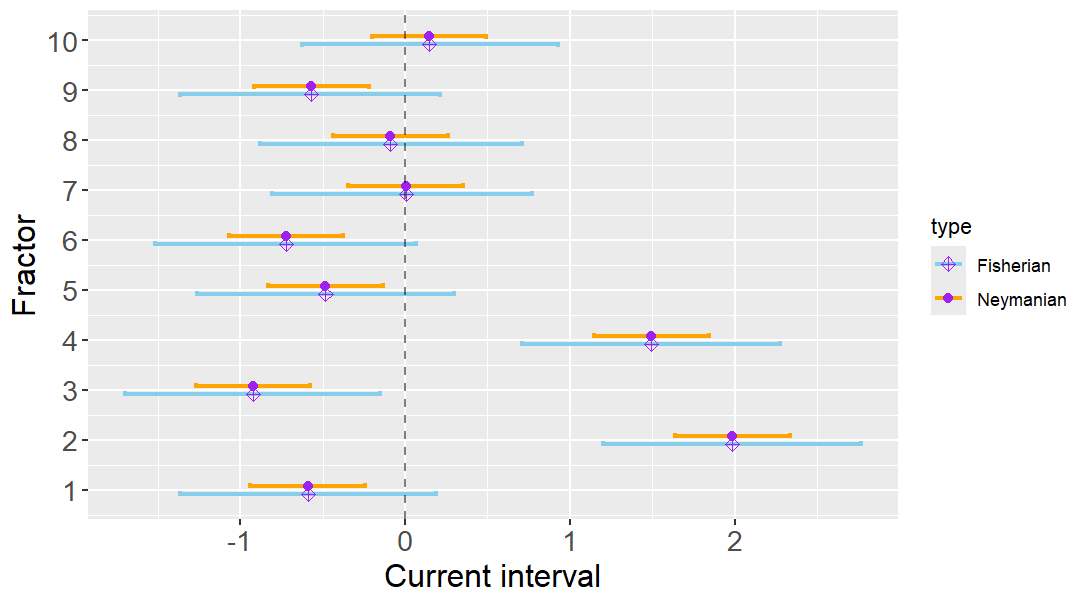} 
\end{tabular}
\caption{95\% Neymanian confidence intervals for Example \ref{exam4}.} 
\label{exam4_figure}
\end{figure}

\begin{table}[h]
\caption{Summary of analysis of 95\% Neymainan intervals of PB design reported in Example \ref{exam4}.}
\footnotesize
\setlength{\tabcolsep}{0.1pt}
\label{exam4_sum}
\begin{tabular*}{\linewidth}{@{\extracolsep{\fill}}{c}crc@{}}
\hline
\makecell{Factor}
& \makecell{Point estimate} & \makecell{Neymaian interval} 
& \makecell{Fisherian interval} \\
\hline
 1  & -0.590 & $[-0.938,\,-0.242]$ & $[-1.367,\,0.190]$ \\
2  &  1.984 & $[1.636,\,2.333]$   & $[1.202,\,2.768]$ \\
3  & -0.922 & $[-1.270,\,-0.574]$ & $[-1.700,\,-0.151]$ \\
4  &  1.495 & $[1.146,\,1.843]$   & $[0.710,\,2.277]$ \\
5  & -0.482 & $[-0.830,\,-0.134]$ & $[-1.263,\,0.295]$ \\
6  & -0.721 & $[-1.069,\,-0.373]$ & $[-1.516,\,0.070]$ \\
7  &  0.004 & $[-0.345,\,0.352]$  & $[-0.804,\,0.769]$ \\
8  & -0.089 & $[-0.437,\,0.260]$  & $[-0.880,\,0.711]$ \\
9  & -0.569 & $[-0.918,\,-0.221]$ & $[-1.366,\,0.210]$ \\
10 &  0.146 & $[-0.202,\,0.494]$  & $[-0.624,\,0.927]$ \\
\hline
\end{tabular*}%
\end{table}

In Figure \ref{exam4_figure} and Table \ref{exam4_sum}, for the outcome variable current, Neymanian inference indicates that the average causal effects of factors 1-6 and 9 are all nonzero, whereas Fisherian inference indicates that only the average causal effects of factors 2-4 are nonzero. In addition, the confidence intervals obtained by Fisherian inference are wider than those obtained by Neymanian inference, which is consistent with the result reflected in (\ref{balanced}), namely that the variance under Fisherian inference is larger than that under Neymanian inference. Therefore, under the same confidence level, the confidence intervals produced by Fisherian inference are more likely to contain 0. The original study, using the ANOVA method, identified factors 1-6 and 9, which is consistent with the result obtained by Neymanian inference. By comparison, the result obtained by Fisherian inference is more conservative.

\section{Discussion}\label{sec6}
Most applications of PB design have been studied from the perspective of statistical inference, whereas this paper analyzes them from the perspective of causal inference. Based on the potential outcome, we defined the causal effect of the PB design and performed the Neymanian inference and the Fisherian inference. In Neymanian inference, we derive variance and construct a Neymanian interval. In Fisherian inference, we provide pseudocode for calculating the Fisherian intervals of the PB design. We applied the method to four examples and interpreted the role of each factor based on the defined causal effect, thereby improving the interpretability of the results. Furthermore, the factors identified by the average causal effects across the four examples fully covered all significant factors identified in the original paper through statistical inference, with no omissions.

Several future directions stem from this paper. First, we examine PB design in causal inference from a design perspective, though it could also be approached through modeling. Second, we construct Bayesian inference for PB design and compare it to other inferences.
\section*{Acknowledgments} The work was supported by the National Natural Science Foundation of China (12561047), the Xinjiang Talent Development Fund (XJRC-2025-KJ-PY-KJLJ-108), and the 2025 Central Guidance for Local Science and Technology Development Fund (ZYYD2025ZY20).
\section*{Conflict of interest} 
The authors declare that they have no conflict of interest.


\newpage
\section*{Appendix: Algorithm 1}
\begin{algorithm}[H]
\caption{Fisherian confidence interval for the $k$th factorial effect $\tau(k)$.}\label{alg1}
\KwData{Assignment variable $W$, Significance level $\alpha$, Permutations number $B$, $g_k(k=1,\ldots,g_{N-1})$, Observed outcomes $Y_i^{obs}(i=1,\ldots,n)$, Initial interval $[{\hat\tau(k,W,Y^{obs})-c, \hat\tau(k,W,Y^{obs})+c}]$, vector ${\bm \eta}=(\eta_1,\ldots, \eta_{N-1})$.}
\KwResult{Fisherian interval $[L_k,U_k]$.}
\Fn{\Pv{$\eta^{*}_k$}}{
${\tilde{\bm\eta}}\gets{\bm \eta}$,\quad $\tilde {\bm\eta}[k]\gets{\eta^{*}_k}$\;
\For{$i = 1 \to n$}
{
$\mu_i=Y^{obs}_i-z^{obs}_{i}{ \tilde {\bm\eta}}/2$\;
{\For{$j = 1 \to N$}
{$Y_i({\bf z}_j)=\mu_i+z_{j}{ \tilde {\bm\eta}}/2$\;}}}
$count \gets 0$\;
\For{$b=1\to B$}
{\(W^b\) is a permutation of \(W\)\;
\For{$j = 1 \to N$}
{$\bar Y^{b,obs}({\bf z}_j)=\frac{1}{n_j}W_i^b({\bf z}_j)Y_i({\bf z}_j)$\;}
$\hat\tau(k,W^b,\bar Y^{b,obs})=\frac{2}{N}g_k\bar Y^{b,obs}$\;
{\If {$\hat\tau(k,W^b,\bar Y^{b,obs})\geq \hat\tau(k,W,\bar Y^{obs})$}{
 $count \gets count + 1$\;}}}
 $p\gets count/B$\;
\Return $p$}
\Fn{\Bo{$u$}}{
$a \gets \hat\tau(k,W,\bar Y^{obs})-c$, $b \gets \hat\tau(k,W,\bar Y^{obs})+c$, $p(a) \leftarrow pvalue(a)$, $p(b) \leftarrow pvalue(b)$\;
\For{$iter<iter_{max}$}
{$m=(a+b)/2$, $p(m) \leftarrow pvalue(m)$\;}
\eIf{$(p(a)-u)(p(m)-u)\leq0$}{
    $b \gets m$\;
}{$a \gets m$, $p(a) \gets p(m)$\;}
}
$L_k \leftarrow bound(\alpha/2)$\;
$U_k \leftarrow bound(1-\alpha/2)$\;
\Return $[L_k,U_k]$
\end{algorithm}

\end{document}